\theoremstyle{definition}
\newtheorem{ex}{Example}
\newtheorem{defi}{Definition}
\newtheorem{probdefi}{Problem}
\theoremstyle{Theorem}
\newtheorem{thrm}{Theorem}
\newtheorem{lemma}{Lemma}
\DeclareMathOperator*{\argmin}{arg\,min}
\begin{document}

\title{On the Construction of Data Aggregation Tree with Minimum Energy Cost
in Wireless Sensor Networks: NP-Completeness and Approximation Algorithms}

\author{Tung-Wei~Kuo,
        Kate~Ching-Ju~Lin,~\IEEEmembership{Member,~IEEE,}
        and~Ming-Jer~Tsai,~\IEEEmembership{Member,~IEEE}
\thanks{Tung-Wei Kuo and Ming-Jer Tsai are with the Department Computer Science, National Tsing Hua University,
Hsinchu, Taiwan 30013, ROC.}
\thanks{Tung-Wei kuo and Kate Ching-Ju Lin are with Research Center for
Information Technology Innovation, Academia Sinica, Taipei, Taiwan.}
\thanks{This paper was presented in part at IEEE INFOCOM 2012.}
}



\maketitle

\begin{abstract}
In many applications, it is a basic operation for the sink to periodically collect reports from
all sensors. Since the data gathering process usually proceeds for many rounds, it is important to
collect these data efficiently, that is, to reduce the energy cost of data transmission. Under
such applications, a tree is usually adopted as the routing structure to save the computation
costs for maintaining the routing tables of sensors. In this paper, we work on the problem of
constructing a data aggregation tree that minimizes the total energy cost of data transmission in
a wireless sensor network. In addition, we also address such a problem in the wireless sensor
network where relay nodes exist. We show these two problems are NP-complete, and propose
$O(1)$-approximation algorithms for each of them. Simulations show that the proposed algorithms
each have good performance in terms of the energy cost.
\end{abstract}

\begin{IEEEkeywords}
Approximation algorithms, Approximation methods, Relays, Routing, Wireless sensor networks.
\end{IEEEkeywords}

\IEEEpeerreviewmaketitle

\section{Introduction}
\IEEEPARstart{I}{n} many applications, sensors are required to send reports to a specific target (e.g. base
station) periodically~\cite{R1}. In habitat monitoring~\cite{R9} and civil structure maintenance~\cite{R10}, it is a basic operation for the sink to periodically collect reports from sensors.
Since the data gathering process usually proceeds for many rounds, it is necessary to reduce the
number of the packets, which carries the reports, transmitted in each round for energy saving. In
this paper, we undertake the development of data gathering in wireless sensor networks.

Data aggregation is a well-known method for data gathering, which can be performed in various
ways. In~\cite{R1}, a fixed number of reports received or generated by a sensor are aggregated
into one packet. In other applications, a sensor can aggregate the reports received or generated into 
one report using a divisible function 
(e.g. SUM, MAX, MIN, AVERAGE, top-k, etc.)~\cite{R11}. Data 
compression, which deals with the correlation between data such that the number of reports is reduced, 
is another method for data gathering~\cite{R12},~\cite{R13}. In many applications, the spatial or
temporal correlation does not exist between data (e.g. status reports~\cite{R1}), and data
aggregation is a more suitable method for data gathering.

\begin{figure}[t]
\centering
\includegraphics[width=7.5cm]{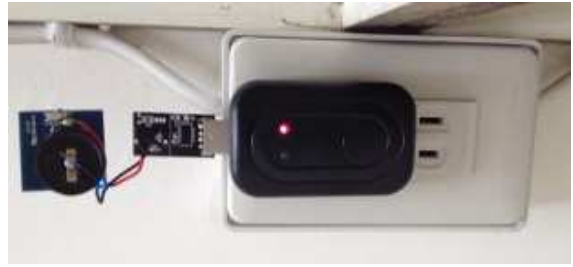}
\caption{A sensor of Octopus wireless sensor network.} \label{fig:OCTOPUS}
\end{figure}

The effectiveness of data aggregation is mainly determined by the routing structure. In many data
aggregation algorithms, a tree is used as the routing 
structure~\cite{R2},~\cite{R5},~\cite{R14},~\cite{R23},~\cite{R24},~\cite{R27}, 
especially for the applications that have to
monitor events continuously. The reason is that sensors, which usually have limited resources, can
save relatively high computational costs for maintaining routing tables if sensors route packets
based on a tree. 

While several papers target at the maximization of the network 
lifetime~\cite{R2},~\cite{R5}, 
the problem of minimizing the total energy cost is also well studied in the literature~\cite{Sinha},~\cite{Min}.
Moreover, for some indoor applications, sensors may have AC power plugs. 
For example, the sensor of Octopus wireless sensor network~\cite{octopus}, 
as shown in Fig.~{\ref{fig:OCTOPUS}}, is plugged in the socket. 
Under such circumstance or energy conservation activity, energy saving then becomes 
the major issue. In this paper, the problem of constructing a data aggregation tree with minimum 
energy cost will be studied. \textbf{Our contributions are described below:}

\begin{itemize}
\item We prove the problem of constructing a data aggregation tree with minimum energy cost,
termed MECAT, is NP-complete and provide a 2-approximation algorithm.
\item We study the variant of such a problem, in which the relay nodes exist, termed MECAT$\_$RN.
We show the MECAT$\_$RN problem is NP-complete and demonstrate a 7-approximation algorithm.
\item We show any $\lambda$-approximation algorithm of the Capacitated Network Design (CND)
problem~\cite{R20} can be used to obtain a $2\lambda$-approximation algorithm of the MECAT$\_$RN problem.
\item We conduct several simulations to evaluate the performances of the proposed algorithms.
\end{itemize}

The remainder of this paper is organized as follows. Section~\ref{Sec: Network Model} describes
the network model and shows the MECAT problem is NP-complete. Section~\ref{Sec: Approximation
Algorithm} provides a 2-approximation algorithm for the MECAT problem. In 
Section~\ref{Sec:
Aggregation with Relay Nodes}, we show the  MECAT$\_$RN problem is NP-complete and give a
7-approximation algorithm. We show a $2\lambda$-approximation algorithm of the MECAT$\_$RN problem
can be obtained using a $\lambda$-approximation algorithm of the CND problem in Section~\ref{Sec:
Discussion}. Using simulations, we evaluate the performances of the proposed algorithms in Section~\ref{Sec: Numerical Results}. Related works are studied in Section~\ref{Sec: Related Work}.
Finally, we conclude the paper in Section~\ref{Sec: Conclusion}.

\section{Network Model and Problem Definition}
\label{Sec: Network Model}

We first illustrate the network model in Section~\ref{Subsec: Network Model}. Subsequently, our
problem is described and shown to be NP-complete in Section~\ref{Subsec: Problem Definition}.

\subsection{The Network Model} \label{Subsec: Network Model}
We model a network as a connected graph $G = (V, E)$ with weights $s(v) \in \mathbb{Z^{+}}$ and 0
associated with each node $v \in V \setminus \{r\}$ and $r$, respectively, where $V$ is the set of
nodes, $E$ is the set of edges, and $r \in V$ is the sink. Each node $v$ has to send a report of
size $s(v)$ to sink $r$ periodically in a multi-hop fashion based on a routing tree. A routing
tree constructed for a network $G = (V, E)$ with sink $r$ is a directed tree $T = (V_T, E_T)$ with
root $r$, where $V_T=V$ and a directed edge $(u,v) \in E_T$ only if an undirected edge $\{u,v\}
\in E$. A node $u$ can send a packet to a node $v$ only if $(u,v) \in E_T$, in which case $u$ is a
child of $v$, and $v$ is the parent of $u$. For the energy consumption, we only consider the
energy cost of the radio~\cite{R5}. Let $Tx$ and $Rx$ be the energy needed to send and receive a
packet, respectively. While routing, a hop-by-hop aggregation is performed according to the
aggregation ratio, $q$, which is the size of reports that can be aggregated into one
packet. Because it would be meaningless if the aggregation
ratio is set to a non-integer, the aggregation ratio is assumed to be an integer through this
paper. It is noteworthy that we implicitly assume that the transmission energy and the receiving 
energy of a packet are constants. In~\cite{Min02topfive, Wang:2001:EEM:383082.383105}, the authors 
observe that in a wireless sensor network with a small packet size, the startup energy cost, that is, 
the energy consumption in the state transition from sleep to idle, exceeds the transmission cost. Thus, we can 
view $Tx$ ($Rx$) as the sum of the transmission cost (receiving cost) and the startup energy cost. 
Then, as long as the packet size is small, $Tx$ and $Rx$ are approximately constants.


\begin{figure}
\center
\subfigure[]{\includegraphics[scale=.8]{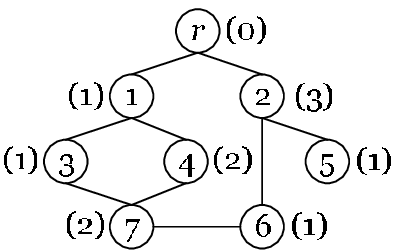}\label{Fig: network model a}}\qquad
\subfigure[]{\includegraphics[scale=.8]{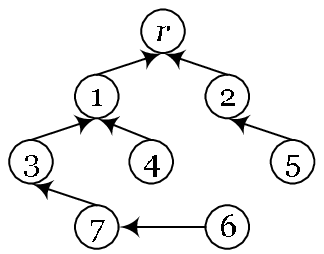}\label{Fig: network model b}}\\
\caption{The network model. (a) A wireless sensor network, where each node has a weight shown in
parentheses. (b) A routing tree.}
\label{Fig: network model}
\end{figure}

\begin{ex}
Fig.~\ref{Fig: network model}(b) is a routing tree constructed for the wireless sensor network
shown in Fig.~\ref{Fig: network model}(a). Assume the aggregation ratio is 3, and both $Tx$ and
$Rx$ are equal to 1. Using the routing tree, node 6 first sends a packet containing its report to
node 7. After node 7 receives the packet from node 6, node 7 aggregates the reports of nodes 6 and
7 into one packet and then sends the packet to node 3. The process proceeds until node $r$
receives the reports of all nodes. Clearly, 3, 2, 2, 1, 1, 1, and 1 packets are sent by nodes 1,
2, 3, 4, 5, 6, and 7, respectively; therefore, a total of 11 packets are sent (and received) by
the nodes. It is easy to verify that a total of 9 packets are required to be sent if the parent of
node 6 is set to node 2.
\end{ex}

\subsection{The Problem and Its Hardness} \label{Subsec: Problem Definition}

We first describe our problem in the following.

\begin{probdefi}
Given a network $G= (V, E)$ with weights $s(v) \in \mathbb{Z^{+}}$ and 0 associated with each node
$v \in V \setminus \{r\}$ and $r$, respectively, a sink $r \in V$, an aggregation ratio $q \in
\mathbb{Z^{+}}$, energy costs $Tx \in \mathbb{R^{+}}$ and $Rx \in \mathbb{R^{+}}$ for transmitting and
receiving a packet, respectively, and $C \in \mathbb{R^{+}}$, the \textbf{M}inimum
\textbf{E}nergy-\textbf{C}ost \textbf{A}ggregation \textbf{T}ree (\textbf{MECAT}) problem asks for
a routing tree $T = (V_T, E_T)$ with root $r$ and $V_T=V$, such that the total transmission and
reception energy consumed by all sensors is not greater than $C$. In addition, MECAT$(G, r, q, Tx,
Rx, C)$ denotes an instance of the MECAT problem, and $COST(T)$ denotes the energy cost of a
routing tree $T$.
\end{probdefi}

Next, we prove that the MECAT problem is NP-complete by showing a polynomial-time reduction from
the Load-Balanced Semi-Matching problem, an NP-complete problem, as described below.

\begin{defi}
A \textbf{semi-matching} in a bipartite graph $G = (U \cup V, E)$ is an edge set $M \subseteq E$,
such that every node in $U$ incident to exactly one edge in $M$. Given a semi-matching $M$ and $v
\in V$, ${Adj}_{M}(v)$ denotes the set of nodes $u$ with $\{v,u\} \in M$.
\end{defi}

\begin{probdefi}~\cite{R6}
Given a bipartite graph $G = (U \cup V, E)$ with a weight $w(u) \in \mathbb{Z^{+}}$ associated with
each node $u \in U$ and $k \in \mathbb{Z^{+}}$, the \textbf{L}oad-\textbf{B}alanced
\textbf{S}emi-\textbf{M}atching (\textbf{LBSM}) problem asks for a semi-matching $M$ such that $k
\ge \max_{v \in V} \sum_{u \in Adj_{M}(v)} w(u)$. Furthermore, LBSM$(G, k)$ denotes an instance of
the LBSM problem.
\end{probdefi}

\begin{thrm}
The MECAT problem is NP-complete. \label{Thm: MECAT NP-completeness}
\end{thrm}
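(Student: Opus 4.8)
The plan is to prove both membership in NP and NP-hardness via a reduction from LBSM. For membership, a routing tree $T$ is a polynomial-size certificate: given $T$, I would compute each subtree's total report size $S(v)$ in a single bottom-up pass, obtain the number of packets node $v$ transmits as $\lceil S(v)/q\rceil$, and verify $(Tx+Rx)\sum_{v\neq r}\lceil S(v)/q\rceil \le C$, all in polynomial time. For hardness, given an instance LBSM$(G=(U\cup V,E),k)$ I would build a three-layer network: a sink $r$; for each $v\in V$ a node $x_v$ adjacent to $r$ with weight $s(x_v)=1$; and for each $u\in U$ a node $y_u$ with weight $s(y_u)=w(u)$, joined to $x_v$ whenever $\{u,v\}\in E$. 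I then set $q=k+1$, $Tx=Rx=1$, $P_U=\sum_{u\in U}\lceil w(u)/(k+1)\rceil$, and the budget $C=2(P_U+|V|)$. (If some $u$ has no neighbor in $V$, the LBSM instance is trivially infeasible and I would output a fixed no-instance; otherwise the constructed $G$ is connected and all weights lie in $\mathbb{Z}^{+}$.) The construction is clearly polynomial.

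The crux of the equivalence is the behavior of a \emph{canonical} tree, in which every $y_u$ is a leaf hanging on one adjacent $x_v$ and every $x_v$ is a child of $r$. There, the parent choices of the $y_u$ form exactly a semi-matching $M$, node $x_v$ carries load $L_v=\sum_{u:\,M(u)=v}w(u)$, and it transmits $\lceil (1+L_v)/(k+1)\rceil$ packets. The choice $q=k+1$ is engineered so that this equals $1$ precisely when $L_v\le k$ and is at least $2$ otherwise; this is exactly how the mandatory unit self-weight forced by $s(\cdot)\in\mathbb{Z}^{+}$ gets absorbed. Since each $y_u$ is a leaf, it contributes the fixed amount $\lceil w(u)/q\rceil$, so a valid semi-matching yields total packet count $P_U+|V|$, hence cost exactly $C$, whereas any overloaded bin pushes the count to at least $P_U+|V|+1$.

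To finish both directions I would combine two ingredients. First, a global lower bound: $S(v)\ge s(v)$ gives $\sum_{v\neq r}\lceil S(v)/q\rceil\ge \sum_{v\neq r}\lceil s(v)/q\rceil=|V|+P_U$ for \emph{any} tree, so no tree has cost below $C$. Second, a normalization argument: given any tree, repeatedly detach each $x_v$ that is not a child of $r$ (together with its subtree) and reattach it directly to $r$; this only removes the weight $S(x_v)>0$ from the subtree totals of its former ancestors, so no ceiling-rounded packet count increases, and after at most $|V|$ such moves every $x_v$ is a child of $r$, which forces every $y_u$ (whose only neighbors are $x$-nodes) to be a leaf. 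Thus an optimal tree may be taken canonical. Combining the two, MECAT admits a tree of cost $\le C$ iff the canonical optimum has cost exactly $C$ iff some semi-matching keeps every $L_v\le k$, i.e. iff LBSM is a yes-instance.

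The main obstacle I anticipate is the normalization step: I must argue carefully that reattaching a subtree higher up never raises the packet count at any node and that the procedure terminates and genuinely forces the clean three-layer shape, despite the arbitrary chains (such as $r,x_{v_1},y_{u_1},x_{v_2},\dots$) that the graph a priori permits. Getting the interaction between the ceiling function and the unavoidable integer self-weights right, which is precisely why the aggregation ratio is set to $k+1$ rather than $k$, is the delicate point that makes the budget $C$ separate yes- from no-instances exactly.
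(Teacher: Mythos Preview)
Your proof is correct and follows essentially the same route as the paper: both reduce from LBSM, build a three-layer graph $r$--$V$--$U$, set $q=k+1$ so that a $V$-node transmits a single packet exactly when its assigned $U$-load is at most $k$, and argue that an optimal tree can be put into canonical (layered) form by reattaching misplaced $V$-nodes directly to $r$.

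The one substantive difference is how the LBSM weights $w(u)$ are encoded. You set $s(y_u)=w(u)$ directly, which keeps the construction small and the arithmetic clean. The paper instead gives every node unit weight and hangs $w(u_i)-1$ pendant leaves off each $u_i$, so that $u_i$'s subtree carries exactly $w(u_i)$ units. Your version is more elementary; the paper's gadget buys a slightly stronger statement, namely that MECAT is NP-hard even when all report sizes equal $1$. Otherwise the normalization step, the choice of $q$, and the tight budget argument are the same in both proofs.
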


The high-level idea of the proof is to show that finding an aggregation 
tree such that every node sends only one packet, which is a special case of the MECAT problem, is NP-complete. 

\begin{proof}
First, the MECAT problem is clearly in NP, since we can verify in polynomial time if a candidate
solution is a tree and satisfies the energy cost constraint. Next, we prove that the MECAT problem
is NP-hard by showing a polynomial-time reduction from the LBSM problem to the MECAT problem. For
every instance LBSM$(G=(U \cup V,E),k)$, we construct an instance MECAT$(G', r, q, Tx, Rx, C)$ as
follows:

\begin{enumerate}
\item $G' = (\lbrace r \rbrace \cup U \cup V \cup W, E \cup E_{R} \cup E_{W})$ with weights 1
and 0 associated with nodes in $U \cup V \cup W$ and $\{r\}$, respectively,
\item $q = k+1$,
\item $Tx = Rx = 1$, and
\item $C = 2 (\vert W \vert + \sum_{1 \leqslant i \leqslant \vert U \vert}
{\lceil \dfrac{\vert W_{i} \vert + 1}{q}\rceil + \vert V \vert})$,
\end{enumerate}

\noindent where $W = \bigcup_{1 \leqslant i \leqslant \vert U \vert} W_{i}$, $W_{i} = \lbrace
w_{i,j} \vert 1 \leqslant j \leqslant w(u_{i}) - 1 \rbrace$, $E_{R} = \lbrace \{r, v_i\} \vert 1
\leqslant i \leqslant \vert V \vert \rbrace$, and $E_{W} = \lbrace \{u_i, w_{i,j}\} \vert 1
\leqslant i \leqslant \vert U \vert, 1 \leqslant j \leqslant w(u_{i}) - 1 \rbrace$. Clearly, this
instance is constructed in polynomial time. See Fig.~\ref{Fig: reduction1}, for example.

\begin{figure}
\center \subfigure[]{\includegraphics[width = 7.5cm]{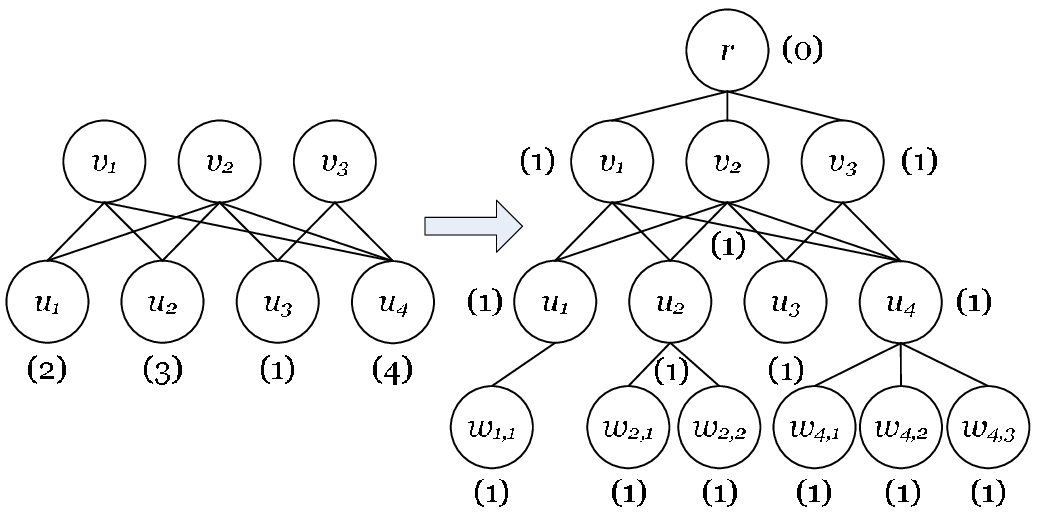}\label{Fig: reduction1 a}}\qquad
\subfigure[]{\includegraphics[width = 7.5cm]{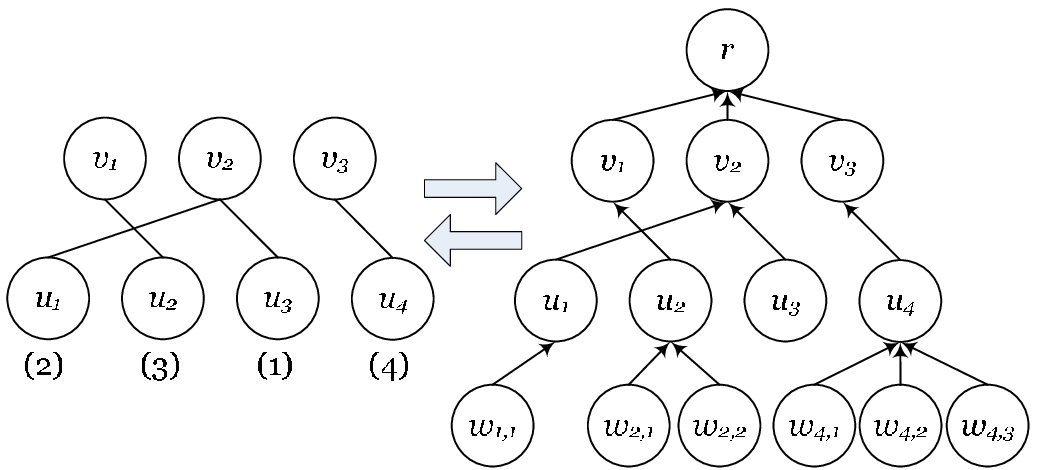}\label{Fig: reduction1 b}}\\
\caption{NP-hardness of the MECAT problem. (a) Reduction from the LBSM problem to the MECAT
problem, where $w(u_i)$ is shown in parentheses and $k$ is equal to 4. (b) Correspondence between
the solutions for the LBSM and MECAT problems.} \label{Fig: reduction1}
\end{figure}

We need to show LBSM$(G,k)$ has a feasible solution if, and only if, MECAT$(G', r, q, Tx, Rx, C)$
has a feasible solution. For the "only if" part, let $M$ be a semi-matching in $G$ such that $k \ge
\max_{v \in V} \sum_{u \in Adj_{M}(v)} w(u)$. We show that using $M$, a routing tree that spans
$\lbrace r \rbrace \cup U \cup V \cup W$ and has a total energy cost not greater than $C$ can be
constructed. Let $T = (V_T, E_T)$ be a routing tree with $V_T=\lbrace r \rbrace \cup U \cup V \cup
W$ and $E_T=\lbrace (v_i,r) \vert 1 \leqslant i \leqslant \vert V \vert \rbrace \cup \lbrace
(u_{i}, v_{j}) \vert \{u_{i}, v_{j}\} \in M \rbrace \cup \lbrace (w_{i,j},u_i) \vert 1 \leqslant i
\leqslant \vert U \vert, 1 \leqslant j \leqslant w(u_{i}) - 1 \rbrace$. We show the total energy
cost of $T$ is not greater than $C$. Since $Tx = Rx = 1$, the total transmission (or reception)
energy of the nodes in $W$ (or $U$) is $\vert W \vert$ and the total transmission (or reception)
energy of the nodes in $U$ (or $V$) is $\sum_{1 \leqslant i \leqslant \vert U \vert}{\lceil
\frac{\vert W_{i} \vert + 1}{q} \rceil}$. Since $C=2 (\vert W \vert + \sum_{1 \leqslant i
\leqslant \vert U \vert}{\lceil \dfrac{\vert W_{i} \vert + 1}{q}\rceil} + \vert V \vert)$ and
$Tx=Rx=1$, we only need to show each node in $V$ sends exactly one packet. The size of reports
sent by node $v_{j}$ in $V$ is
\begin{align}
\nonumber 1+\sum_{u_{i} \in Adj_{M}(v_j)}{(1 + \vert W_{i}\vert)}
& \leqslant 1+\max_{v \in V}{\sum_{u_{i} \in Adj_{M}(v)}{(1 + \vert W_{i}\vert)}} \\
\nonumber & = 1+\max_{v \in V}{\sum_{u_{i} \in Adj_{M}(v)}{w(u_i)}}\\
& \le k+1.
\end{align}
Thus, each node in $V$ needs to send reports with a total size at most $k+1$, which can be
aggregated into 1 packet.

For the "if" part, let $T = (V_T, E_T)$ be a routing tree with minimum energy cost not greater than
$C$. We show a semi-matching in $G$ such that $k \ge \max_{v \in V} \sum_{u \in Adj_{M}(v)} w(u)$
can be constructed using $T$. Clearly, $(v_j,u_i) \notin E_T$ for all $v_j \in V$ and $u_i \in U$;
otherwise, there exists a routing tree $T' = (V_T, E_T \setminus \{(v_j,u_i)\} \cup \{(v_j,r)\})$
with less energy cost than $T$. Let $M=\lbrace \{u_{i}, v_{j}\} \vert (u_{i}, v_{j}) \in E_T
\rbrace$. We show
\begin{equation}\label{Eq: AA3}
k \ge \max_{v \in V} \sum_{u \in Adj_{M}(v)} w(u).
\end{equation}
As in the proof of the "only if" part, the total transmission and reception energy of the nodes in $U$
and $W$ plus the reception energy of the nodes in $V$ is
\begin{equation}\label{Eq: AA1}
2 (\vert W \vert + \sum_{1 \leqslant i \leqslant \vert U \vert}{\lceil \dfrac{\vert W_{i} \vert +
1}{q} \rceil}).
\end{equation}
In addition, the total energy cost of $T$ is not greater than
\begin{equation}\label{Eq: AA2}
C= 2 (\vert W \vert + \sum_{1 \leqslant i \leqslant \vert U \vert}{\lceil \dfrac{\vert W_{i} \vert
+ 1}{q}\rceil + \vert V \vert}).
\end{equation}
(\ref{Eq: AA1}) and (\ref{Eq: AA2}) imply each node in $V$ sends only one packet. Thus, each node
in $V$ receives at most $k$ reports, implying (\ref{Eq: AA3}).
\end{proof}

\section{Approximation Algorithm} \label{Sec: Approximation Algorithm}
As the MECAT problem is NP-complete, we provide an approximation algorithm. Observe that while
sending a packet to the sink, the longer the routing path is, the greater the energy cost is.
Naturally, we would route each packet via a shortest path to the sink. The resulting routing
structure is then a shortest path tree. There are at least three benefits to route packets using a
shortest path tree. First, a shortest path tree is easy to construct in a distributed manner, as
described in the following two steps. The sink node first broadcasts a message such that each node
can evaluate the hop distance from the sink~\cite{R35}. Then, each node sets its parent to the
node with a smaller hop distance from the sink. Second, in many time-critical applications, it is
necessary to route packets using a shortest path tree to achieve the minimum packet transmission
delay. Third, the algorithm is irrelevant to the aggregation ratio and 
report sizes. So it can be applied to the case where the aggregation ratio or report sizes are not 
known or even they may vary from time to time. Although a shortest path tree may not have minimum energy cost (see Fig.~\ref{Fig: shortest
path tree without minimum energy cost}, for example), Theorem~\ref{Thm: Approximation Algorithm}
shows a shortest path tree algorithm has an approximation ratio of 2. 
Definition~\ref{Defi: PP} and
Lemma~\ref{Lemma: PP} are necessary for the proof of Theorem~\ref{Thm: Approximation Algorithm}.

\begin{defi}\label{Defi: PP}
Given a graph $G = (V, E)$ and a root $r \in V$, a \textbf{Minimum Descendant Tree} is a tree $T$
rooted at $r$ and spanning $V$, such that $\sum_{v \neq r}{des_{T}(v)}$ is minimized, where
$des_{T}(v)$ is the total size of reports to be sent by $v$'s descendants in $T$.
\end{defi}

\begin{figure}
\centering \subfigure[]{\includegraphics[scale=.8]{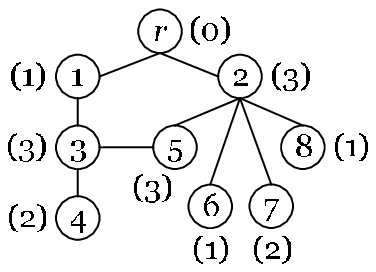}}\qquad
\subfigure[]{\includegraphics[scale=.8]{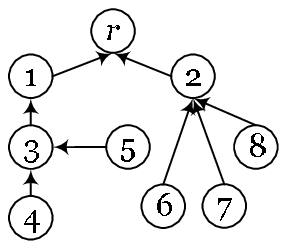}}\\
\caption{(a) A wireless sensor network, where $q = 9$ and $Tx = Rx = 1$. (b) A non-shortest path
tree with minimum energy cost.} \label{Fig: shortest path tree without minimum energy cost}
\end{figure}

\begin{lemma}\label{Lemma: PP}
Every shortest path tree is a minimum descendant tree.
\end{lemma}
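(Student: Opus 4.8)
The plan is to reduce the quantity $\sum_{v \neq r} des_T(v)$ to a weighted sum of node depths, and then observe that a shortest path tree minimizes every node's depth simultaneously. First I would rewrite the objective by interchanging the order of summation. A node $u$ contributes its report size $s(u)$ to $des_T(v)$ exactly when $u$ is a descendant of $v$, i.e. when $v$ is a proper ancestor of $u$ in $T$. Hence
\begin{equation}
\sum_{v \neq r} des_T(v) = \sum_{u \in V} s(u) \cdot \big| \{ v \neq r : v \text{ is a proper ancestor of } u \text{ in } T \} \big|.
\end{equation}

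Next I would count these ancestors. Writing $d_T(u)$ for the hop distance from $u$ to $r$ in $T$ (so $d_T(r)=0$), the proper ancestors of a node $u \neq r$ are exactly the $d_T(u)$ nodes lying on the path from $u$ up to $r$, and precisely one of them is $r$ itself. Therefore the number of proper ancestors of $u$ other than $r$ is $d_T(u)-1$. Since $s(r)=0$, the term for $u=r$ vanishes, and we obtain
\begin{equation}
\sum_{v \neq r} des_T(v) = \sum_{u \neq r} s(u)\big(d_T(u)-1\big) = \Big( \sum_{u \neq r} s(u)\, d_T(u) \Big) - \sum_{u \neq r} s(u).
\end{equation}

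The second sum $\sum_{u \neq r} s(u)$ is a constant that does not depend on the choice of $T$. So minimizing $\sum_{v \neq r} des_T(v)$ is equivalent to minimizing the weighted depth sum $\sum_{u \neq r} s(u)\, d_T(u)$, in which every weight $s(u)$ is nonnegative. The key property I would invoke is that in any spanning tree rooted at $r$, the depth $d_T(u)$ is at least the hop distance from $u$ to $r$ in $G$, and that a shortest path tree attains this lower bound for every node simultaneously. Thus a shortest path tree minimizes each term $s(u)\, d_T(u)$ at once, hence minimizes the whole weighted sum, and therefore minimizes $\sum_{v \neq r} des_T(v)$. This shows every shortest path tree is a minimum descendant tree.

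The main obstacle is essentially bookkeeping rather than a deep argument: getting the ancestor count right (the $-1$ correction and the role of the zero-weight sink $r$), and stating carefully why a single tree can minimize all depths at once, namely that the defining property of a shortest path tree is exactly that each node's tree-depth equals its graph hop distance, which is a per-node lower bound. I would also note that the conclusion is insensitive to whether ``descendants'' is read inclusively or exclusively, since the inclusive reading merely removes the constant $\sum_{u \neq r} s(u)$ and leaves the same minimizer. Once the summation interchange is set up correctly, the rest follows immediately.
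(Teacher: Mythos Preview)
Your proof is correct. The summation interchange is carried out properly: with $des_T(v)$ excluding $v$ itself (as the paper uses it, since the cost formula has $des_T(v)+s(v)$), each $u\neq r$ has exactly $d_T(u)$ proper ancestors including $r$, hence $d_T(u)-1$ excluding $r$, and the reduction to $\sum_{u\neq r}s(u)\,d_T(u)$ minus a constant is valid. The per-node lower bound $d_T(u)\geq D_G(u)$ for any spanning tree, attained simultaneously by any shortest path tree, then finishes the argument.

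The paper takes a different route. Instead of rewriting the objective as a weighted depth sum, it proves two separate claims: (i) every minimum descendant tree is a shortest path tree, via an exchange argument (if some node's tree depth exceeds its graph distance, reroute it to a closer parent and strictly decrease the sum); and (ii) all shortest path trees have the same value of $\sum_{v\neq r}des_T(v)$, via a level-by-level recurrence. Combining (i) and (ii), the minimum is achieved by some shortest path tree, and since all shortest path trees tie, every one is a minimum descendant tree. Your argument is more direct and shorter: the summation interchange makes the dependence on depths explicit, so the per-node lower bound applies immediately and no separate ``all SPTs tie'' claim is needed. The paper's approach, on the other hand, yields a bit more as a by-product, namely the converse inclusion (every minimum descendant tree is a shortest path tree), which your method would also give once one notes that all $s(u)$ are strictly positive for $u\neq r$.
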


\begin{proof}
The lemma directly follows two claims below:

\begin{enumerate}
\item Every minimum descendant tree is a shortest path tree.
\item Every shortest path tree $T$ has the same value of $\sum_{v \neq r}{des_T(v)}$.
\end{enumerate}

We show claim 1 by contradiction. Suppose that there exists a minimum descendant tree
$T'=(V_{T'},E_{T'})$ that is not a shortest path tree. Let $D_{G}(v)$ and $D_{T'}(v)$ be the hop
distances from $v$ to $r$ in $G$ and $T'$, respectively. Let $V' = \lbrace v \vert D_{G}(v) <
D_{T'}(v) \rbrace$ and $v' = \argmin_{v \in V'}{D_{G}(v)}$, i.e., $v'$ is the node in $V'$ with
minimum hop distance to $r$ in $G$. Then, $V' \neq \emptyset$ and $v'$ must exist. Let $u$ be the
parent of $v'$ in $T'$, and $u'$ be $v'$ neighboring node with a smaller hop distance from $r$ in
$G$. Let $T"=(V_{T'},E_{T'} \setminus \{(v',u)\} \cup \{(v',u')\})$. Clearly, $T"$ is a tree with
$\sum_{v \neq r}{des_{T"}(v)} < \sum_{v \neq r}{des_{T'}(v)}$, a contradiction.

For claim 2, let $T_1$ and $T_2$ be any two shortest path trees. Clearly, $T_1$ and $T_2$ have the
same height, say $H$. Let $L_{k}(T)$ be the set of nodes whose hop distances from the root in a
tree $T$ are $k$. We have
\begin{equation}\label{1}
\sum_{v \in L_{H}(T_1)}{des_{T_1}(v)} = \sum_{v \in L_{H}(T_2)}{des_{T_2}(v)} = 0,
\end{equation}
and
\begin{align}\label{2}
\nonumber \sum_{v \in L_k(T)}{des_{T}(v)} = \sum_{v \in L_{k+1}(T)}{des_{T}(v)} &+ \sum_{v \in
L_{k+1}(T)}{s(v)},\\
&\forall 1 \leqslant k \leqslant H-1.
\end{align}
Since $T_1$ and $T_2$ are shortest path trees,
\begin{equation}\label{3}
\sum_{v \in L_{k}(T_1)}{s(v)} = \sum_{v \in L_{k}(T_2)}{s(v)}, \forall 1 \leqslant k \leqslant H.
\end{equation}
By \eqref{1}, \eqref{2}, and \eqref{3}, we have claim 2.
\end{proof}

\begin{thrm}
Every shortest path tree algorithm is a 2-approximation algorithm. \label{Thm: Approximation Algorithm}
\end{thrm}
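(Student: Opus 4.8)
The plan is to express $COST(T)$ in closed form and then sandwich the cost of a shortest path tree between two lower bounds on the optimal cost. Since in any routing tree every packet transmitted by a non-root node is received by exactly its parent, the total number of receptions equals the total number of transmissions, and the number of packets a node $v$ transmits is $\lceil (des_T(v)+s(v))/q \rceil$. Hence $COST(T) = (Tx+Rx)\sum_{v\neq r}\lceil (des_T(v)+s(v))/q\rceil$. I would first record this identity, since it reduces the theorem to comparing a single sum of ceilings for a shortest path tree $T_{SP}$ against an optimal tree $T^{*}$.

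Next I would extract two lower bounds on $COST(T^{*})$ from the elementary inequalities $\lceil x\rceil\ge x$ and $\lceil x\rceil\ge 1$, the latter being valid because $s(v)\ge 1$ forces every non-root node to send at least one packet. Dropping the ceilings via $\lceil x\rceil\ge x$ gives $COST(T^{*})\ge \frac{Tx+Rx}{q}\big(\sum_{v\neq r}des_{T^{*}}(v)+\sum_{v\neq r}s(v)\big)$, while $\lceil x\rceil\ge 1$ gives $COST(T^{*})\ge (Tx+Rx)(|V|-1)$.

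For the upper bound on $COST(T_{SP})$ I would use $\lceil x\rceil\le x+1$ to split each term into a volume part and a constant $+1$ part, obtaining $COST(T_{SP})\le \frac{Tx+Rx}{q}\sum_{v\neq r}(des_{T_{SP}}(v)+s(v)) + (Tx+Rx)(|V|-1)$. Here Lemma~\ref{Lemma: PP} does the real work: because $T_{SP}$ is a minimum descendant tree, $\sum_{v\neq r}des_{T_{SP}}(v)\le \sum_{v\neq r}des_{T^{*}}(v)$, so the volume part is bounded by the same expression with $T^{*}$ in place of $T_{SP}$ (the $\sum_{v\neq r}s(v)$ contribution being tree-independent). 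Matching the volume part against the first lower bound and the $+1$ part against the second lower bound then yields $COST(T_{SP})\le COST(T^{*})+COST(T^{*})=2\,COST(T^{*})$.

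The main obstacle is the treatment of the ceiling functions: a naive bound such as $\lceil x\rceil\le 2x$ fails when the per-node load is small, so the two lower bounds must be deployed separately, one absorbing the aggregate report volume and the other absorbing the unavoidable one-packet-per-node overhead. The role of Lemma~\ref{Lemma: PP} is precisely to control the volume term, and confirming that the descendant-sum optimality of $T_{SP}$ is exactly what licenses the replacement of $des_{T_{SP}}$ by $des_{T^{*}}$ is the crux of the argument.
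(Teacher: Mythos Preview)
Your proof is correct and is essentially the paper's own argument: both derive the closed form $COST(T)=(Tx+Rx)\sum_{v\neq r}\lceil (des_T(v)+s(v))/q\rceil$, use Lemma~\ref{Lemma: PP} to replace $des_{T_{SP}}$ by $des_{T^{*}}$ in the volume term, and combine this with the per-node lower bound $(Tx+Rx)(|V|-1)\le COST(T^{*})$. The only cosmetic difference is that the paper phrases the ceiling step as bounding the difference $COST(T_{SP})-COST(T^{*})<(Tx+Rx)(|V|-1)$ directly from $\lceil x\rceil-x<1$, whereas you split via $\lceil x\rceil\le x+1$ and match the two pieces to two separate lower bounds; the algebra is identical and the paper in fact obtains the strict inequality $COST(T_{SP})<2\,COST(T^{*})$.
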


\begin{proof}
Let $T$ be a routing tree. Since the number of packets sent by nodes equals that received by nodes
in $T$,
\begin{equation}\label{5}
COST(T) = (Tx + Rx)\sum_{v \neq r}{\lceil \frac{des_{T}(v)+s(v)}{q}\rceil}.
\end{equation}
Let $T_{OPT}$ be a routing tree with minimum energy cost and $T_{SPT}$ be an arbitrary shortest
path tree. By Lemma~\ref{Lemma: PP}, we obtain
\begin{align}\label{7}
\nonumber COST(T_{OPT}) &= (Tx + Rx) \sum_{v \neq r}{\lceil \frac{des_{T_{OPT}}(v)+s(v)}{q} \rceil} \\
\nonumber               &\geqslant (Tx + Rx) \sum_{v \neq r}{\frac{des_{T_{OPT}}(v)+s(v)}{q}}\\
                        &\geqslant (Tx + Rx) \sum_{v \neq r}{\frac{des_{T_{SPT}}(v)+s(v)}{q}}.
\end{align}
In addition, by \eqref{5}, we have
\begin{equation}\label{8}
COST(T_{SPT}) = (Tx + Rx) \sum_{v \neq r}{\lceil \frac{des_{T_{SPT}}(v)+s(v)}{q} \rceil} \\
\end{equation}
Therefore, by \eqref{7} and \eqref{8}, we get
\begin{equation}\label{9}
COST(T_{SPT})-COST(T_{OPT}) < (Tx+Rx)(\vert V \vert -1).
\end{equation}
In addition, each node has to send at least one packet, and these packets must be received
by some nodes. Thus,
\begin{equation}\label{10}
COST(T_{OPT}) \geqslant (Tx+Rx)(\vert V \vert -1).
\end{equation}
Combining \eqref{9} and \eqref{10}, we obtain
\begin{equation}
COST(T_{SPT}) < 2 \cdot COST(T_{OPT}).
\end{equation}
\end{proof}

\section{Data Aggregation with Relay Nodes}
\label{Sec: Aggregation with Relay Nodes}

To improve the network connectivity or survivability, the relay node placement problem in a wireless sensor
network has been extensively investigated in the 
literature~\cite{R16},~\cite{R17},~\cite{R18}. 
These relay nodes, which do not produce reports, are used to forward the
packets received from other nodes. In this section, we study the problem of constructing a data
aggregation tree with minimum energy cost in the presence of relay nodes.

\subsection{The Problem and Its Hardness} \label{Subsec: Problem Definition 2}
Here, a routing tree only needs to span all non-relay nodes. For the convenience of
description, we assume every relay node has a zero-sized report. In the following, the problem
is described and shown to be NP-complete. 

\begin{probdefi}
Given a network $G= (V, E)$ with weights $s(u) \in \mathbb{Z^{+}}$ and 0 associated with each
source $u \in U \subseteq V \setminus \{r\}$ and $v \in V \setminus U$, respectively, a set of
sources $U$, a sink $r \in V$, an aggregation ratio $q \in \mathbb{Z^{+}}$, energy costs $Tx \in
\mathbb{R^{+}}$ and $Rx \in \mathbb{R^{+}}$ for transmitting and receiving a packet, respectively, and
$C \in \mathbb{R^{+}}$, the \textbf{M}inimum \textbf{E}nergy-\textbf{C}ost \textbf{A}ggregation
\textbf{T}ree with \textbf{R}elay \textbf{N}odes (\textbf{MECAT$\_$RN}) problem asks for a routing
tree $T = (V_T, E_T)$ with root $r$ and $V_T \supseteq U\cup\{r\}$, such that the total transmission and
reception energy consumed by all sensors is not greater than $C$. Moreover, MECAT$\_$RN$(G, U, r,
q, Tx, Rx, C)$ denotes an instance of the MECAT$\_$RN problem, and $COST(T)$ denotes the energy
cost of a routing tree $T$.
\end{probdefi}

It is natural to guess the MECAT$\_$RN problem is NP-complete, due to its similarity to Problem 1. 
In fact, when $q = \vert U \vert$ and $s(u) = 1$ for all $u \in U$, the MECAT$\_$RN problem becomes
Steiner tree problem~\cite{R7}. On the other hand, for Problem 1, we can add a relay node $w$ and replace any
edge $\{u,v\}$ by two edges $\{u,w\}$ and $\{w, v\}$ to construct an instance of the MECAT$\_$RN problem.
However, for most cases, $q$ is smaller than $\vert U \vert$. Furthermore, when relay nodes 
are deployed, some relay nodes might have degree more than two, that is, some relay nodes can forward packets 
from two or more nodes. By the following theorem, we prove that even if $q < \vert U \vert$ and some relay nodes 
have degree more than two in the original graph, the problem is still NP-complete.

\begin{figure}
\center
\subfigure[]{\includegraphics[scale=.8]{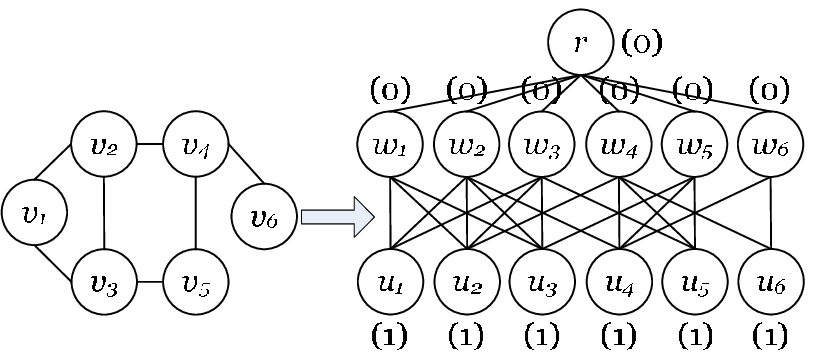}\label{Fig: reduction2 a}}\qquad
\subfigure[]{\includegraphics[scale=.8]{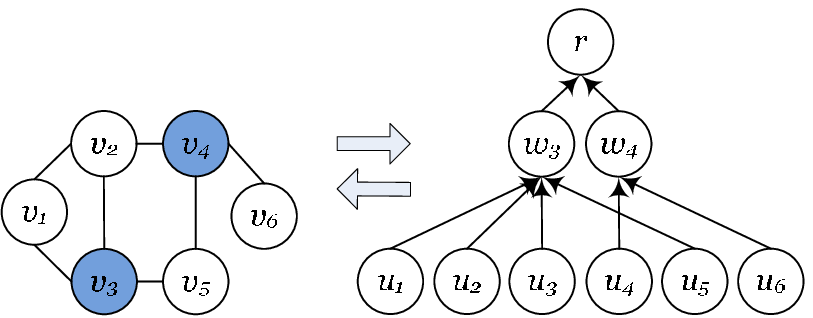}\label{Fig: reduction2 b}}\\
\caption{NP-hardness of the MECAT$\_$RN problem. (a) Reduction from the Dominating Set problem to
the MECAT$\_$RN problem. (b) Correspondence between the solutions for the Dominating Set and
MECAT$\_$RN problems.} \label{Fig: reduction 2}
\end{figure}

\begin{thrm}
The MECAT$\_$RN problem is NP-complete even if $q < \vert U \vert$ and some relay nodes 
have degree more than two in the original graph. \label{Thm: MECATRN NP-completeness}
\end{thrm}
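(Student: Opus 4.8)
The plan is to follow the template of Theorem~\ref{Thm: MECAT NP-completeness}: first note that MECAT\_RN is in NP, since a candidate tree can be checked in polynomial time for spanning $U\cup\{r\}$ and its cost evaluated through the identity \eqref{5}, which remains valid here because relays simply have $s(v)=0$. For NP-hardness I would reduce from the \textbf{Dominating Set} problem, which is NP-complete even on graphs of bounded maximum degree $\Delta$ (e.g.\ $\Delta=3$); keeping $\Delta$ a constant is exactly what lets the constructed instance satisfy $q<|U|$.

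Given a Dominating Set instance $(H=(V_H,E_H),k)$ with $|V_H|=n$ and maximum degree $\Delta$, I would build MECAT\_RN$(G',U,r,q,Tx,Rx,C)$ as follows. For each $v_i\in V_H$ create a \emph{source} $S_i$ with $s(S_i)=1$ and a \emph{relay} $R_i$ of report size $0$, and add the sink $r$. Join every relay to the sink by $\{R_i,r\}$, and join $S_i$ to $R_j$ precisely when $v_j\in N[v_i]$, the closed neighborhood of $v_i$ (i.e.\ when $v_j$ dominates $v_i$). Finally set $U=\{S_1,\dots,S_n\}$, $Tx=Rx=1$, $q=\Delta+1$, and $C=(Tx+Rx)(n+k)=2(n+k)$. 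The instance is built in polynomial time, $q=\Delta+1<n=|U|$ on every nontrivial instance, and each relay $R_j$ has degree $\deg_H(v_j)+2>2$, so the restrictions in the statement hold.

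For the ``only if'' direction, a dominating set $D$ with $|D|\le k$ yields the flat tree that makes each relay in $\{R_j:v_j\in D\}$ a child of $r$ and attaches every source $S_i$ to some activated $R_j$ with $v_j\in N[v_i]$ (which exists by domination); since each such relay then forwards at most $|N[v_j]|\le\Delta+1=q$ unit reports, it sends a single packet, and by \eqref{5} the cost is $2(n+|D|)\le C$. For the converse I would use a structural lower bound that is insensitive to the shape of the tree: after deleting every relay with no source descendant (which never raises the cost), each of the $n$ sources and each of the remaining $\rho$ relays sends at least one packet, so \eqref{5} gives
\[
\frac{COST(T)}{Tx+Rx}=\sum_{v\neq r}\Big\lceil\tfrac{des_T(v)+s(v)}{q}\Big\rceil\ge n+\rho .
\]
A feasible tree has $COST(T)\le 2(n+k)$, which forces $\rho\le k$; and because each $S_i$'s parent is some relay $R_j$ with $v_j\in N[v_i]$, the set $\{v_j:R_j\text{ is a parent of a source}\}$ dominates $H$ and has size at most $\rho\le k$.

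The step I expect to be the main obstacle is this converse direction. The induced graph is layered ($r$ touches only relays, relays touch only $r$ and sources, sources touch only relays), so a feasible tree could in principle be deep, routing sources through chains of alternating relays and sources to aggregate before reaching $r$; I must rule out that such routing undercuts the budget without corresponding to a small dominating set. The resolution is precisely that the bound $\sum_{v\neq r}\lceil(des_T(v)+s(v))/q\rceil\ge n+\rho$ holds for \emph{every} tree shape, so deeper aggregation can only increase the packet count, and the per-node optimum of one packet is realized exactly by the two-level tree induced by a dominating set. The complementary check on the forward direction is that the degree bound makes every relay's load at most $q$, so the reduction targets \emph{uncapacitated} dominating set rather than a capacitated variant.
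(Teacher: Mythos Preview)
Your reduction from Dominating Set, the resulting layered instance, the choice $q=\Delta+1$, $Tx=Rx=1$, $C=2(n+k)$, and the forward direction are exactly the paper's construction and argument (with $R_j$ playing the role of the paper's $w_j$ and $S_i$ the role of $u_i$). The only noteworthy difference is in the converse: the paper first argues that in a \emph{minimum}-cost tree every used relay must have $r$ as its parent (mirroring the exchange step in Theorem~\ref{Thm: MECAT NP-completeness}), which forces every source to be a leaf and then counts $|W'|\le k$; you instead prune leaf relays and apply the shape-independent lower bound $\sum_{v\neq r}\lceil(des_T(v)+s(v))/q\rceil\ge n+\rho$, which works for \emph{any} feasible tree without invoking optimality. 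Both are valid; your counting argument is slightly more elementary and sidesteps the need to assert that the optimal tree is flat. Two small cosmetic points: restricting to bounded-degree $H$ is sufficient but not necessary for $q<|U|$ (it is enough that $\Delta(H)+1<n$, which the paper notes is the only nontrivial case), and your claim that every relay has degree $>2$ needs $H$ to have no isolated vertices, though the theorem only asks that \emph{some} relay have degree $>2$.
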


\begin{proof}
First, it is easy to see that the problem is in NP since a non-deterministic algorithm just needs
to guess a tree spanning all nodes in $U$ and check in polynomial time if the energy
cost of the tree is not greater than $C$. Next, to show the MECAT$\_$RN problem is NP-hard, we demonstrate a
polynomial-time reduction from the Dominating Set problem~\cite{R7}, which asks for a dominating
set $D$ in $G$ with $\vert D \vert \leqslant k$ for a given instance DS$(G, k)$, to the MECAT$\_$RN
problem. For any instance DS$(G = (V, E), k)$, we construct an instance MECAT$\_$RN$(G', U, r, q,
Tx, Rx, C)$ as follows:

\begin{enumerate}
\item $G' = (\lbrace r \rbrace \cup W \cup U, E_S \cup E_R)$ with weights 0 and 1 associated with
nodes in $\lbrace r \rbrace \cup W$ and $U$, respectively,
\item $q = \Delta(G)+1$,
\item $Tx = Rx = 1$, and
\item $C = 2(\vert V \vert + k)$,
\end{enumerate}

\noindent where $W = \lbrace w_i \vert 1 \leqslant i \leqslant \vert V \vert \rbrace$, $U =
\lbrace u_i \vert 1 \leqslant i \leqslant \vert V \vert \rbrace$, $E_S = \lbrace \{r,w_i\} \vert 1
\leqslant i \leqslant \vert V \vert \rbrace$, and $E_R = \lbrace \{w_i, u_i\} \vert 1 \leqslant i
\leqslant \vert V \vert \rbrace \cup \lbrace \{w_i, u_j\} \vert \{v_i, v_j\} \in E \rbrace$, and
$\Delta(G)$ denotes the maximum degree of $G$. Clearly, this instance is constructed in polynomial
time and $q < \vert U \vert$. Note that the dominating set problem is trivial when 
$\Delta(G)+1 = \vert U \vert$. In addition, there must exist a relay node with degree more than two.
See Fig.~\ref{Fig: reduction 2}, for example.

We need to show DS$(G, k)$ has a feasible solution if, and only if, MECAT$\_$RN$(G', U, r, q, Tx,
Rx, C)$ has a feasible solution. For the "only if" part, let $D$ be a dominating set in $G$ with
$\vert D \vert \leqslant k$. We show that a routing tree that spans $U$ and has a total energy
cost not greater than $C$ can be constructed using $D$. Let $W' = \lbrace w_i \vert v_i \in D
\rbrace$. We construct $T= (V_T, E_T)$ as follows. Let $V_T=\lbrace r \rbrace \cup W' \cup U$.
Set the parent of $w_i$ to $r$ for all $w_i \in W'$ and the parent of $u_i$ to an
arbitrary neighboring node in $W'$ for all $u_i \in U$. Since $D$ is a dominating set in $G$, each
node in $U$ has a parent in $T$. Thus, $T$ is a routing tree spanning $U$. We show the total
energy cost of $T$ is not greater than $C$. Since $Tx = Rx = 1$, the total transmission (or
reception) energy of the nodes in $U$ (or $W$) is $\vert U \vert = \vert V \vert$. In addition,
since each node in $W'$ receives at most $\Delta(G)+1$ reports, it sends exactly one packet to the
parent in $T$. Since $\vert W' \vert = \vert D \vert \le k$, the total transmission (or
reception) energy of the nodes in $W$ (or $r$) is at most $k$. Thus,  the total energy cost of
$T$ is not greater than $2 \cdot (\vert V \vert+k)=C$.

For the "if" part, let $T= (V_T, E_T)$ be a routing tree that spans $U$ and has minimum energy cost
not greater than $C= 2(\vert V \vert + k)$. Let $W'= V_T \setminus (U \cup \{r\})$. We claim that $D
= \lbrace v_i \vert w_i \in W' \rbrace$ is a dominating set in $G$ with $\vert D \vert \leqslant k$.
In $T$, the parent of each node in $U$ is in $W'$. This implies $D$ is a dominating set
in $G$. In addition, as in the proof of Theorem~\ref{Thm: MECAT NP-completeness}, the parent of
each node in $W'$ is $r$; otherwise, a routing tree that spans $U$ and has less energy cost exists.
Since the total transmission (or reception) energy of the nodes in $U$ (or $W'$) is $\vert V
\vert$, the total transmission (or reception) energy of the nodes in $W'$ (or $r$) is at most
$k$. Thus, $\vert D \vert = \vert W' \vert \leqslant k$.
\end{proof}

\subsection{Approximation Algorithm} \label{Subsec: Approximation of MECATRN}
A Steiner tree algorithm and a shortest path tree algorithm provide solutions with minimum number
of edges and minimum average hop distance from sources to the sink for the MECAT$\_$RN problem,
respectively. However, both of them have bad approximation ratios, as described in Theorems~\ref{Thm: Steiner tree algorithm approximation ratio} and \ref{Thm: shortest path tree algorithm
approximation ratio}. Their proofs are given in the appendix.

\begin{thrm} \label{Thm: Steiner tree algorithm approximation ratio}
The approximation ratio of a Steiner tree algorithm is at least $\Theta(\vert U \vert)$.
\end{thrm}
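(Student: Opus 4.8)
The plan is to exhibit an explicit family of MECAT$\_$RN instances, parameterized by $n=\vert U\vert$, on which any algorithm that returns a minimum Steiner tree is forced to pick a deep ``chain'' whose energy cost is $\Theta(n^2)$, while a competing tree that spends a few extra relay nodes achieves cost $O(n)$; the ratio of these costs is $\Theta(n)=\Theta(\vert U\vert)$.

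\emph{Construction.} First I would take the vertex set $\{r\}\cup U\cup W$ with sources $U=\{u_1,\dots,u_n\}$ and relays $W=\{w_1,\dots,w_n\}$. The edge set consists of a path threaded through the sources, namely $\{r,u_1\}$ together with $\{u_i,u_{i+1}\}$ for $1\le i\le n-1$, plus a relay ``shortcut'' $\{r,w_i\}$ and $\{w_i,u_i\}$ for every $i$. I set $s(u_i)=1$ for all $i$, $q=1$, and $Tx=Rx=1$. Choosing $q=1$ is the crucial modeling decision: it switches off aggregation entirely, so that by the cost identity~\eqref{5} the number of packets a node forwards equals the number of sources in its subtree. (Equivalently one may keep a large $q$ but set $s(u_i)=q$; the point is that the edge-minimizer must not be allowed to profit from aggregation, since otherwise the Steiner objective would coincide with the energy objective and the example would collapse.)

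Next I would argue that the \emph{unique} minimum Steiner tree is the source path $r-u_1-u_2-\cdots-u_n$. The subgraph induced on $\{r\}\cup U$ is already a path, hence its own only spanning tree, so the only way to connect all of $U$ to $r$ using no relay is this chain, which has exactly $n$ edges; any tree that recruits even one relay $w_i$ spans at least $n+2$ nodes and therefore uses strictly more than $n$ edges. Thus a Steiner-tree algorithm must output this chain. In the chain the subtree of $u_i$ contains $u_{i+1},\dots,u_n$, so $des_{T}(u_i)+s(u_i)=n-i+1$ and, with $q=1$, node $u_i$ forwards $n-i+1$ packets; summing and applying~\eqref{5} gives $COST(T_{\mathrm{Steiner}})=(Tx+Rx)\sum_{i=1}^{n}(n-i+1)=\Theta(n^2)$.

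Finally I would upper-bound the optimum by the relay-shortcut tree $T'$ in which the parent of each $u_i$ is $w_i$ and the parent of each $w_i$ is $r$. Every source and every relay then carries exactly one packet, so by~\eqref{5} we get $COST(T_{OPT})\le COST(T')=(Tx+Rx)\cdot 2n=\Theta(n)$. Combining the two estimates yields $COST(T_{\mathrm{Steiner}})/COST(T_{OPT})=\Omega(n)=\Theta(\vert U\vert)$, as claimed. The main obstacle is the third step, i.e. certifying that the minimum Steiner tree is genuinely the deep chain and cannot be shallow: this is exactly why the graph is engineered so that the induced subgraph on $\{r\}\cup U$ is a single path and every shortcut strictly raises the node count, which pins down the Steiner optimum and rules out any cheaper-to-route minimum-edge alternative.
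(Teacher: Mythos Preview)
Your construction is correct and matches the paper's approach almost exactly: the paper also threads a path $r,u_1,\dots,u_{|U|}$ through the sources and attaches two-hop relay shortcuts $r\!-\!s_i\!-\!u_i$ so that the unique minimum Steiner tree is the chain with cost $\Theta(|U|^2)$ while the shortcut tree costs $\Theta(|U|)$. The only cosmetic differences are that the paper takes $q=2$ (rather than your $q=1$) and provides relay shortcuts only for $u_3,\dots,u_{|U|}$; neither change affects the argument or the asymptotics.
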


\begin{thrm} \label{Thm: shortest path tree algorithm approximation ratio}
The approximation ratio of a shortest path tree algorithm is at least $\Theta(\vert U \vert)$.
\end{thrm}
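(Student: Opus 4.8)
The plan is to exhibit a family of MECAT\_RN instances, indexed by a parameter $n$ with $|U|=n$, on which \emph{every} shortest path tree is forced to cost $\Theta(n)$ times the optimum. The intuition is that a shortest path tree routes each source along its own shortest route to the sink; if these routes are pairwise disjoint, no two packets ever meet and aggregation is impossible, so each source pays for a long hop-by-hop journey. By contrast, the optimal tree can first funnel all sources into a single shared relay, aggregate their reports into one packet, and then forward just that one packet to the sink.

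Concretely, I would build $G'$ as follows. Take the sink $r$ and $n$ sources $u_1,\dots,u_n$, each with report size $s(u_i)=1$. For each $i$, attach a \emph{private} path of relay nodes $u_i - a_{i,1} - \cdots - a_{i,L-1} - r$ of hop length $L$, using fresh relays for every $i$ so the private paths are pairwise vertex-disjoint. Add one \emph{shared} relay $w$ adjacent to all of $u_1,\dots,u_n$, and join $w$ to $r$ by a further path of hop length $L$ through its own fresh relays. Finally set $Tx=Rx=1$, $q=n$, and $L=n$. Writing $d(x,y)$ for the hop distance in $G'$, the crucial feature is that $d(u_i,r)=L$ along the private path, whereas any route through $w$ has length $1+d(w,r)=L+1>L$; hence each source has a \emph{unique} shortest path, namely its private path. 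Consequently every shortest path tree must route each $u_i$ along its private path and can never use $w$.

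It then remains to compare costs using the packet-counting cost formula established earlier (each node $v$ transmits $\lceil (des_T(v)+s(v))/q\rceil$ packets, and $COST$ is $(Tx+Rx)$ times the total). In any shortest path tree the $n$ private paths are disjoint and each carries a single size-$1$ report, so each of the $L$ transmitters on a path sends exactly one packet; summing gives $nL$ transmissions and cost $2nL=2n^2$. For the optimum, I exhibit the funnel tree in which every $u_i$ sends to $w$ and $w$ forwards toward $r$: the $n$ size-$1$ reports arriving at $w$ aggregate into $\lceil n/q\rceil=1$ packet (this is where $q=n$ is used), which then travels the $L$-hop $w$--$r$ path, giving $n+L$ transmissions and cost $2(n+L)=4n$. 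Therefore $COST(T_{SPT})/COST(T_{OPT})\ge 2n^2/4n=n/2=\Theta(|U|)$.

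The one point that needs care --- and the only real obstacle --- is guaranteeing that the bound holds for \emph{any} shortest path tree the algorithm might output, not merely for some convenient one. This is exactly what the strict distance gap we engineered (namely $d(w,r)=L$, so that the detour through $w$ costs $L+1>L$) buys us: it makes each source's shortest path unique, ruling out every tie-break that would let the algorithm sneak aggregation in through $w$. Once this uniqueness is in hand the packet counts are immediate, and since the theorem places no restriction on $q$, the choice $q=|U|$ is legitimate.
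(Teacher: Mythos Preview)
Your construction is correct and actually delivers a slightly stronger conclusion than the paper's: by making each source's private path strictly shorter than the detour through $w$, you force \emph{every} shortest path tree to route the sources disjointly, so the $\Theta(|U|)$ lower bound applies to any SPT-based algorithm, regardless of tie-breaking.

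The paper takes a different route. It arranges the sources in a chain $r-u_1-u_2-\cdots-u_{|U|}$ and, for each $u_i$ with $i\ge 3$, attaches a private relay path of length $i-1$ from $r$ to $u_i$; with $q=|U|$ the chain itself is the good tree (cost $\Theta(|U|)$), while routing each $u_i$ along its private relay path gives cost $\sum_i (i-1)=\Theta(|U|^2)$. In that graph, however, $u_i$'s private relay path ties with the route $u_i\!-\!u_{i-1}\!-\!(\text{relay path of }u_{i-1})$, so the paper only claims that \emph{some} shortest path tree algorithm achieves the bad ratio. Your hub-and-spoke instance trades this economy for robustness: the strict distance gap eliminates ties and yields a statement about all SPTs. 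Both constructions use $\Theta(|U|^2)$ relay nodes, so neither is more parsimonious in that respect; the paper's optimal tree happens to use no relays at all, while yours relies on the hub $w$, but this is immaterial to the theorem.
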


Theorems~\ref{Thm: Steiner tree algorithm approximation ratio} and \ref{Thm: shortest path tree
algorithm approximation ratio} tell us that a routing tree with a constant approximation ratio
cannot be found by minimizing either the number of edges or the average hop distance from sources
to the sink. Our method (Algorithm~\ref{Algo: Our Algo}) is to construct a routing tree that
approximates both a Steiner tree and a shortest path tree based on 
Salman's algorithm~\cite{R21}
(Algorithm~\ref{Algo: Salman Algo}) for the Capacitated Network Design 
problem~\cite{R20}. The
Capacitated Network Design problem, Salman's algorithm, and the Light Approximate Shortest-path
Tree (LAST)~\cite{R22} used in Salman's algorithm are introduced below.

\begin{probdefi}~\cite{R20}
Given a graph $G=(V, E)$ with weight $w(e) \in \mathbb{R^{+}}$ associated with each edge $e \in E$
indicating the length and weight $s(u)  \in \mathbb{Z^{+}}$ associated with each source $u \in U
\subseteq V$ indicating the demand size to route to sink $r \in V$, a set of
sources $U$, a sink $r$, and a transmission facility capacity $q \in \mathbb{Z^{+}}$, the
\textbf{C}apacitated \textbf{N}etwork \textbf{D}esign (\textbf{CND}) problem is to find a path
from $u$ to sink $r$ for each source $u \in U$, such that the total cost of installing all
facilities is minimized, where the cost of installing $k$ facilities on an edge with length $l$ is
$k \cdot l$. Note that a node might have multiple outgoing edges in a feasible solution of the CND
problem. That is, a feasible solution of the CND problem might not be a tree. Moreover, CND$(G, U,
r, q, C)$ denotes an instance of the CND problem, and $COST_{CND}(R)$ denotes the cost of
installing facilities of a route $R$.
\end{probdefi}

\begin{defi}~\cite{R22}
Given a graph $G=(V,E)$ with weight $w(e) \in \mathbb{R^{+}}$ associated with each edge $e \in E$,
a spanning tree $T$ rooted at $r$ is called an \textbf{$(\alpha, \beta)$-LAST}, where $\alpha \geqslant
1$ and $\beta \geqslant 1$, if the following two conditions are satisfied:
\begin{enumerate}
\item For every node $v$, the distance from $v$ to $r$ in $T$ is at most $\alpha$ times the
minimum distance from $v$ to $r$ in $G$.
\item The weight of $T$ is at most $\beta$ times that of the minimum spanning tree of $G$.
\end{enumerate}
\end{defi}

\begin{algorithm}[t]
\caption{: Salman's Algorithm for the CND Problem} \label{Algo: Salman Algo}
\renewcommand{\algorithmicrequire}{\textbf{Input:}}
\begin{algorithmic}[1]
\REQUIRE $G$, $U$, $r$, $C$
\STATE Construct a complete graph $G'$ with node set $U \cup \lbrace r \rbrace$.\\
\STATE Set the weight of each edge $(u,v)$ in $G'$ to the length of the shortest path from $u$ to
$v$ in $G$.\\
\STATE Compute a (3,2)-LAST $T_L$ in $G'$.\\
\STATE Let $(u,u_1,\cdots,u_n,r)$ be the shortest path from $u$ to $r$ in $T_{L}$. Then, the
concatenation of paths $P_{u,u_1}$, $P_{u_1,u_2}$, $\cdots$, and $P_{u_n,r}$ is the output path
from $u$ to $r$, where $P_{x,y}$ denotes the shortest path from $x$ to $y$ in $G$.\\
\STATE Return the output path from $u$ to $r$ for each $u \in U$.
\end{algorithmic}
\end{algorithm}

\begin{algorithm}[t]
\caption{: Our Algorithm for the MECAT$\_$RN Problem}\label{Algo: Our Algo}
\renewcommand{\algorithmicrequire}{\textbf{Input:}}
\begin{algorithmic}[1]
\REQUIRE $G$, $U$, $r$, $Tx$, $Rx$, $C$
\STATE Construct a complete graph $G'$ with node set $U \cup \lbrace r \rbrace$.\\
\STATE Set the weight of each edge $(u,v)$ in $G'$ to the hop distance from $u$ to $v$ in $G$.\\
\STATE Compute a (3,2)-LAST $T_L$ in $G'$.\\
\STATE Compute $G" =(V",E")$, where $V"=\lbrace w \vert w \in P_{u,v}$ for some $(u,v) \in T_L
\rbrace$, $E"=\lbrace \{x,y\} \vert  \{x,y\} \in P_{u,v}$ for some $(u,v) \in T_L \rbrace$, and
$P_{u,v}$ is the shortest path from $u$ to $v$ in $G$.\\
\STATE Construct a shortest path tree $T_{SPT}$ rooted at $r$ and spanning $U$ in $G"$.\\
\STATE Return $T_{SPT}$.
\end{algorithmic}
\end{algorithm}

Theorem~\ref{Thm: MECATRN Approximation} shows Algorithm~\ref{Algo: Our Algo} is a 7-approximation
algorithm of the MECAT$\_$RN problem. Lemma~\ref{Lemma: Salman}, derived from the proof of Lemma
2.1 in~\cite{R21}, is used in the proof of Theorem~\ref{Thm: MECATRN Approximation}. We omit the
proof of Lemma~\ref{Lemma: Salman} due to the similarity. Also note that 
Algorithm~\ref{Algo: Our Algo} is again irrelevant to the aggregation ratio and report sizes.

\begin{lemma} \label{Lemma: Salman}
Let $R = \bigcup_{u \in U} P_{u,r}$ be a route of the CND problem, where $P_{u, r}$ is the routing
path from source $u$ to sink $r$, and let $R_{OPT}$ be the route with minimum cost of the CND
problem. Then, $COST_{CND}(R) \leqslant (\alpha' + \beta') COST_{CND}(R_{OPT})$, if the following
two conditions are satisfied:
\begin{enumerate}
\item For every source $u$, the length of $P_{u, r}$ is at most $\alpha'$ times the minimum distance
from $u$ to $r$ in $G$.
\item The total lengths of the edges of $R$ is at most $\beta'$ times that of the Steiner tree of $G$
spanning $U$.
\end{enumerate}
\end{lemma}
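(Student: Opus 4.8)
The plan is to exploit the standard ``flow-plus-tree'' decomposition of the rounded facility cost and to lower-bound the optimum $COST_{CND}(R_{OPT})$ by two separate quantities, one tailored to each hypothesis. First I would write the cost of any route explicitly as $COST_{CND}(R) = \sum_{e \in R} \lceil f_R(e)/q \rceil \, w(e)$, where $f_R(e) = \sum_{u : e \in P_{u,r}} s(u)$ is the total demand crossing edge $e$ (well defined since $R$ is a union of source-to-sink paths). Applying $\lceil x \rceil \le x + 1$ on every edge of $R$ (each of which carries positive flow, so the bound is harmless) splits the cost into a fractional flow term and an integral tree term:
\begin{equation*}
COST_{CND}(R) \le \frac{1}{q}\sum_{e \in R} f_R(e)\, w(e) + \sum_{e \in R} w(e).
\end{equation*}
The next step is to swap the order of summation in the flow term, giving $\sum_{e \in R} f_R(e)\, w(e) = \sum_{u \in U} s(u)\,\ell(P_{u,r})$, the demand-weighted total path length, where $\ell(\cdot)$ denotes length; the second term is just the total edge length of $R$.

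I would then establish two lower bounds on the optimum. Since $\lceil x \rceil \ge x$, dropping the ceilings yields $COST_{CND}(R_{OPT}) \ge \frac{1}{q}\sum_{u \in U} s(u)\, \ell(P^{OPT}_{u,r}) \ge \frac{1}{q}\sum_{u \in U} s(u)\, d_G(u,r)$, because in any feasible route each source's demand must travel at least its shortest-path distance $d_G(u,r)$ to $r$. Separately, the edges used by $R_{OPT}$ connect every source to $r$, so they contain a connected subgraph spanning $U \cup \{r\}$ whose length is at least that of a minimum Steiner tree; writing $\mathrm{ST}(U)$ for the minimum Steiner-tree length and using that each used edge carries at least one facility, this gives $COST_{CND}(R_{OPT}) \ge \sum_{e \in R_{OPT}} w(e) \ge \mathrm{ST}(U)$.

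Finally I would feed the two hypotheses into these bounds. Condition (1), $\ell(P_{u,r}) \le \alpha'\, d_G(u,r)$ for each source, turns the flow term into at most $\frac{\alpha'}{q}\sum_{u} s(u)\, d_G(u,r) \le \alpha'\, COST_{CND}(R_{OPT})$; condition (2), that the total edge length of $R$ is at most $\beta'$ times the Steiner-tree length, turns the tree term into at most $\beta'\, \mathrm{ST}(U) \le \beta'\, COST_{CND}(R_{OPT})$. Summing the two halves of the decomposition then gives $COST_{CND}(R) \le (\alpha' + \beta')\, COST_{CND}(R_{OPT})$, as claimed. I expect the only delicate point to be the lower bounds rather than the algebra: one must verify carefully that dropping the ceilings is legitimate for the flow bound, and, for the tree bound, that the union of optimal source-to-sink paths genuinely contains a Steiner tree of $U$ so that its length dominates $\mathrm{ST}(U)$. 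The ceiling split itself is safe precisely because edges outside $R$ carry no flow and contribute nothing to either side.
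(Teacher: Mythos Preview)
Your argument is correct and is exactly the ``flow-plus-Steiner'' decomposition underlying Lemma~2.1 of Salman et al.\ \cite{R21}, which is precisely what the paper invokes (the paper in fact omits the proof, citing that lemma for the identical reasoning). The two lower bounds on $COST_{CND}(R_{OPT})$---the fractional shortest-path bound and the Steiner-tree bound---together with the ceiling split $\lceil x\rceil\le x+1$ are the whole content, and you have handled the only subtle point (connectedness of $R_{OPT}$ through $r$, ensuring it dominates a Steiner tree) correctly.
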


\begin{thrm}
Algorithm~\ref{Algo: Our Algo} is a 7-approximation algorithm of the MECAT$\_$RN problem.
\label{Thm: MECATRN Approximation}
\end{thrm}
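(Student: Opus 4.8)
The plan is to bound $COST(T_{SPT})$, the energy cost of the tree returned by Algorithm~\ref{Algo: Our Algo}, against $COST(T_{OPT})$ for the optimal MECAT\_RN tree by routing the argument through the CND problem. The key observation is that energy cost and CND installation cost are closely related: under $Tx=Rx=1$, the cost of a routing tree is $2\sum_{v}\lceil des_T(v)+s(v)\rceil/q$-type quantity, and when every node is forced to carry its full demand along its tree path this is exactly twice a CND-style installation cost (number of facilities on each edge times edge length, where here each edge has hop-length measured in $G$). So first I would set up a correspondence between the MECAT\_RN instance and a CND instance on the same graph $G$ with the same sources $U$, sink $r$, and capacity $q$, observing that $COST(T) = (Tx+Rx)\cdot COST_{CND}(\cdot)$ up to the packet-ceiling, so that a good CND route yields a good energy-cost tree and vice versa.

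Next I would verify that Algorithm~\ref{Algo: Our Algo} produces a route satisfying the two hypotheses of Lemma~\ref{Lemma: Salman} with $\alpha'=3$ and $\beta'=2$. For condition~1, since $T_L$ is a $(3,2)$-LAST in the complete graph $G'$ (whose edge weights are hop distances in $G$), the path from any source $u$ to $r$ in $T_L$ has length at most $3$ times the shortest $u$-to-$r$ hop distance; concatenating the underlying shortest paths $P_{u,v}$ in $G$ preserves this, and passing to $T_{SPT}$ in $G''$ can only shorten paths. For condition~2, the LAST property bounds the weight of $T_L$ by $2$ times the minimum spanning tree of $G'$, which in turn bounds the total edge length of the union of shortest paths against the Steiner tree of $G$ spanning $U$. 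Thus Lemma~\ref{Lemma: Salman} gives $COST_{CND}(R) \le (3+2)\,COST_{CND}(R_{OPT}) = 5\,COST_{CND}(R_{OPT})$.

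From here I would close the gap between the factor $5$ on the CND side and the target factor $7$ on the energy side, and this is the step I expect to be the main obstacle. The difficulty is that energy cost counts \emph{packets} (ceilings of demand over capacity), not raw installation cost, so I must control the rounding. The plan is to bound $COST(T_{SPT})$ above by the number of edges in $T_{SPT}$ plus the CND installation cost, i.e. $COST(T_{SPT}) \le (Tx+Rx)\big(|\text{edges}| + COST_{CND}(R)\big)$, since each $\lceil x/q\rceil \le 1 + x/q$ contributes one extra packet per edge. Then I would lower-bound $COST(T_{OPT})$ two ways: it must pay at least the optimal CND cost $COST_{CND}(R_{OPT})$ (every packet crossing an edge installs a facility), and it must also pay at least the Steiner-tree cost to connect $U$ to $r$. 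Combining the factor-$5$ CND bound with these structural lower bounds on $COST(T_{OPT})$ — and absorbing the edge-count term, which is itself bounded by the Steiner/MST cost that $T_{OPT}$ must also pay — should accumulate to the factor $7$. Getting these constants to add up exactly, rather than to a looser bound, is the delicate part, and it will hinge on charging the extra ``$+1$ per edge'' rounding term against the connectivity cost that $T_{OPT}$ cannot avoid.
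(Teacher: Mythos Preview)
Your plan contains a real gap, and it stems from a miscount of $\beta'$. Lemma~\ref{Lemma: Salman} asks that the total edge length of the route be bounded against the \emph{Steiner tree} of $G$ spanning $U$, not against the minimum spanning tree of $G'$. The $(3,2)$-LAST gives $w(T_L)\le 2\cdot w(\mathrm{MST}(G'))$, and then $w(\mathrm{MST}(G'))\le 2\cdot w(\text{Steiner}(G,U))$ by the standard metric-completion argument; hence $\beta'=4$, not $2$. With $\alpha'=3$ and $\beta'=4$, Lemma~\ref{Lemma: Salman} yields a $7$-approximation for CND directly --- this is exactly where the $7$ comes from in the paper, not from any later rounding.

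Correspondingly, the ``ceiling gap'' you are trying to bridge does not exist here. In the CND problem as defined, the number of facilities on an edge is already the integer $\lceil (\text{demand through }e)/q\rceil$, so once edge weights are set to $Tx+Rx$, the CND cost of the route induced by $T_{SPT}$ and the MECAT\_RN energy cost of $T_{SPT}$ coincide exactly: $COST(T_{SPT}) = COST_{CND}(R_{\mathcal{A}})$. Likewise, the paths in $T_{OPT}$ form a feasible CND route of the same cost, so $COST_{CND}(R_{OPT})\le COST(T_{OPT})$. Chaining these equalities/inequalities with the $7$-approximation for CND finishes the proof in one line. Your proposed decomposition $COST(T_{SPT})\le (Tx+Rx)(|\text{edges}|+COST_{CND}(R))$ treats the CND cost as fractional and would double-count the ceiling; it is neither needed nor correct as stated.
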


\begin{proof}
Let Algorithm $\mathcal{A}$ be obtained from Algorithm~\ref{Algo: Our Algo} by replacing Line 2
with Line 2 of Algorithm~\ref{Algo: Salman Algo} and modifying Line 6 to return the path from $u$
to $r$ in $T_{SPT}$ for each $u \in U$ instead of  $T_{SPT}$. We first claim that Algorithm
$\mathcal{A}$ is a 7-approximation of the CND problem. Let $R_1$ and $R_\mathcal{A}$ be the
solutions generated by Algorithms 1 and $\mathcal{A}$, respectively. Clearly, the following two
facts hold:
\begin{enumerate}
\item For every source $u$, the length of $P_{u, r}$ in $R_\mathcal{A}$ is less than that in $R_1$.
\item The total lengths of the edges of $R_\mathcal{A}$ is is less than that of $R_1$.
\end{enumerate}
\noindent \cite{R21} tells us that the length of $P_{u, r}$ in $R_1$ is at most 3 times the minimum
distance from $u$ to $r$ in $G$ for every source $u$ and the total lengths of the edges of $R_1$
is at most 4 times that of the Steiner tree of $G$ spanning $U$. Thus, Algorithm $\mathcal{A}$ is
a 7-approximation of the CND problem by Lemma 2.

Next, given MECAT$\_$RN$(G_1, U, r, q, Tx, Rx, C)$, we construct CND$(G_2, U, r, q, C)$, where
$G_2$ is obtained from $G_1$ by setting the weight of each edge to $Tx+Rx$. Let $T_2$ and
$T_{OPT}$ be a routing tree generated by Algorithm 2 and the routing tree with minimum energy cost
for MECAT$\_$RN$(G_1, U, r, q, Tx, Rx, C)$, respectively. Let $R_\mathcal{A}$ and $R_{OPT}$ be a
route generated by Algorithm $\mathcal{A}$ and the route with minimum cost of installing
facilities for CND$(G_2, U, r, q, C)$, respectively. Note that for each $u \in U$, the sequence of
the nodes in the path from $u$ to $r$ in $T_2$ is equal to that in the path from $u$ to $r$ in
$R_\mathcal{A}$. Thus,
\begin{equation}\label{Eq: RR1}
COST(T_2)=COST_{CND}(R_\mathcal{A}).
\end{equation}
It is also noted that a collection of the path from $u$ to $r$ in $T_{OPT}$ for each $u \in U$ can
be a route $R$ for CND$(G_2, U, r, q, C)$, in which case $COST(T_{OPT})=COST_{CND}(R)$. It implies
\begin{equation}\label{Eq: RR2}
COST_{CND}(R_{OPT}) \le COST(T_{OPT}).
 \end{equation}
Combining (\ref{Eq: RR1}) and (\ref{Eq: RR2}) together with the fact that
$COST_{CND}(R_\mathcal{A}) \le 7 COST_{CND}(R_{OPT})$, we obtain
\begin{equation}\label{Eq: RR4}
COST(T_2) \le 7 COST(T_{OPT}).
\end{equation}
\end{proof}

\section{Discussion} \label{Sec: Discussion}
In Section~\ref{Subsec: Approximation of MECATRN}, we obtain a 7-approximation algorithm of the
MECAT$\_$RN problem from Salman's 7-approximation algorithm of the CND problem. In this section,
we show any $\lambda$-approximation algorithm of the CND Problem $\mathcal{A}$ can be used to
obtain a $2\lambda$-approximation algorithm of the MECAT$\_$RN problem, as described in Algorithm~\ref{Algo: CND-MECATRN} and Theorem~\ref{Thm: CND-MECATRN}. See Fig.~\ref{Fig: alg3} for an example.

\begin{algorithm}
\caption{: CND-Based Algorithm for the MECAT$\_$RN Problem}\label{Algo: CND-MECATRN}
\renewcommand{\algorithmicrequire}{\textbf{Input:}}
\begin{algorithmic}[1]
\REQUIRE $G$, $U$, $r$, $q$, $Tx$, $Rx$, $C$, $\mathcal{A}$
\STATE Obtain a graph $G'$ from $G$ by setting the weight of each edge in $G$ to $Tx+Rx$.\\
\STATE Execute $\mathcal{A}$ with inputs $G'$, $U$, $r$, $q$, and $C$ to obtain $P_{u,r}$, the
path from $u$ to $r$ in $G'$, for each $u \in U$.\\
\STATE Compute $G" =(V",E")$, where $V"=\lbrace w \vert  w \in P_{u,r} \rbrace$ and $E"=\lbrace
\{x,y\} \vert \{x,y\} \in P_{u,r} \rbrace$.\\
\STATE Construct a shortest path tree $T_{SPT}$ rooted at $r$ and spanning $U$ in $G"$.\\
\STATE Return $T_{SPT}$.
\end{algorithmic}
\end{algorithm}

\begin{figure}
\center
\subfigure[]{\includegraphics[scale=.8]{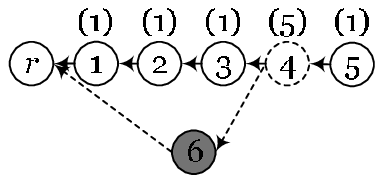}\label{Fig: alg3 a}}\qquad
\subfigure[]{\includegraphics[scale=.8]{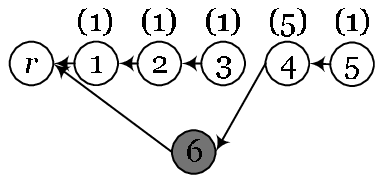}\label{Fig: alg3 b}}\\
\caption{Example of Algorithm 3, where node 6 is a relay node. $q$ is equal to 5. Weights are shown in parentheses. (a) The $P_{u,r}$ after the execution of Line 2, where node 4 sends its report along the path (4, 6, 1) and node 5 sends its report along the path (5, 4, 3, 2, 1). (b) The output of Algorithm 3.}
\label{Fig: alg3}
\end{figure}

\begin{thrm}
Algorithm~\ref{Algo: CND-MECATRN} is a $2\lambda$-approximation algorithm of the MECAT$\_$RN
problem, given that $\mathcal{A}$ is a $\lambda$-approximation algorithm of the CND problem.
\label{Thm: CND-MECATRN}
\end{thrm}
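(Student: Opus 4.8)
The plan is to follow the two-part structure of the proof of Theorem~\ref{Thm: MECATRN Approximation}, but with Salman's concrete guarantees replaced by the generic $\lambda$-approximation of $\mathcal{A}$, and with an extra factor of $2$ paid for turning the (possibly non-tree) CND route into the shortest-path tree $T_{SPT}$. First I would set up the CND instance exactly as in Line~1: let $G'$ be $G$ with every edge weight set to $Tx+Rx$, and consider $\mathrm{CND}(G',U,r,q,C)$. The dictionary between the two problems is that any routing tree $T$, together with the route $R_T$ that sends each source to $r$ along $T$, satisfies $COST(T)=COST_{CND}(R_T)$: the demand crossing the edge from $v$ to its parent is exactly $des_T(v)+s(v)$, so $\lceil (des_T(v)+s(v))/q\rceil$ facilities are installed there, and summing over $v\neq r$ reproduces \eqref{5}, since every edge of $G'$ has weight $Tx+Rx$. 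Applying this to the optimal MECAT$\_$RN tree $T_{OPT}$ shows $R_{T_{OPT}}$ is feasible for the CND instance, whence $COST_{CND}(R_{OPT})\le COST(T_{OPT})$ for the minimum-cost CND route $R_{OPT}$. Since $\mathcal{A}$ is a $\lambda$-approximation, the route $R_\mathcal{A}$ it returns obeys $COST_{CND}(R_\mathcal{A})\le \lambda\,COST_{CND}(R_{OPT})\le \lambda\,COST(T_{OPT})$.

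The heart of the argument is the inequality $COST(T_{SPT})\le 2\,COST_{CND}(R_\mathcal{A})$. Writing $d_e$ for the demand that $R_\mathcal{A}$ routes across an edge $e$, we have $COST_{CND}(R_\mathcal{A})=(Tx+Rx)\sum_{e\in E''}\lceil d_e/q\rceil$, and I would split this cost into its two natural lower bounds, namely $\sum_{e\in E''}\lceil d_e/q\rceil\ge |E''|$ (each used edge carries at least one unit of demand, so needs at least one facility) and $\sum_{e\in E''}\lceil d_e/q\rceil\ge \frac1q\sum_{e\in E''}d_e$. On the tree side, applying $\lceil x\rceil< x+1$ to \eqref{5} gives $COST(T_{SPT})<(Tx+Rx)(\frac1q\sum_{v\neq r}(des_{T_{SPT}}(v)+s(v))+N)$, where $N$ is the number of non-root nodes of $T_{SPT}$, equivalently its number of edges. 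Because $T_{SPT}\subseteq G''$ we have $N\le |E''|$, which charges the second summand to the first lower bound. For the first summand I would note that the total tree flow $\sum_{v\neq r}(des_{T_{SPT}}(v)+s(v))$ equals $\sum_{u\in U}s(u)\,\ell_{T_{SPT}}(u)$, where $\ell_{T_{SPT}}(u)$ is the hop distance from $u$ to $r$ in $T_{SPT}$; since $T_{SPT}$ is a shortest-path tree of $G''$ and the path $P_{u,r}$ used by $R_\mathcal{A}$ lies in $G''$, we get $\ell_{T_{SPT}}(u)\le |P_{u,r}|$, so the total tree flow is at most $\sum_{e\in E''}d_e$, and dividing by $q$ charges the first summand to the second lower bound. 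Combining the two bounds yields $COST(T_{SPT})<2\,COST_{CND}(R_\mathcal{A})\le 2\lambda\,COST(T_{OPT})$.

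The step I expect to be the main obstacle is the edge-count charging $N\le |E''|\le\sum_{e\in E''}\lceil d_e/q\rceil$: this is exactly where it matters that $T_{SPT}$ is built inside the subgraph $G''$ of edges actually used by $R_\mathcal{A}$ rather than inside all of $G$. Restricting to $G''$ keeps the number of tree edges bounded by the CND cost, whereas an unrestricted shortest-path tree could introduce many new edges, which is the failure mode quantified by Theorem~\ref{Thm: shortest path tree algorithm approximation ratio}. The only remaining care is routine bookkeeping around the ceilings and checking that relay nodes, which have $s=0$, contribute nothing anomalous to \eqref{5}.
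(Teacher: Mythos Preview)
Your proposal is correct and follows essentially the same argument as the paper: split $COST(T_{SPT})$ into a fractional-flow term and an edge-count term via $\lceil x\rceil<x+1$, bound each separately by $COST_{CND}(R_\mathcal{A})/(Tx+Rx)$ using the shortest-path property inside $G''$ and the containment $E(T_{SPT})\subseteq E''$, and then chain through $COST_{CND}(R_\mathcal{A})\le\lambda\,COST_{CND}(R_{OPT})\le\lambda\,COST(T_{OPT})$. The paper's write-up differs only cosmetically, expressing the same two bounds as \eqref{Eq: PP2} and \eqref{Eq: PP3}.
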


\begin{proof}
Let $P_{u,r}$ be the path from $u$ to $r$ output by algorithm $\mathcal{A}$, and let $R =
\bigcup_{u \in U} P_{u,r}$. We have
\begin{equation}\label{Eq: PP4}
\frac{COST_{CND}(R)}{Tx+Rx} = \sum_{u \in U}{\frac{s(u)}{q}l(P_{u,r})} + \sum_{e \in
R}{(\lceil z(e) \rceil - z(e))},
\end{equation}
where $z(e) = \sum_{u:e \in P_{u,r}}{s(u)/q}$ and $l(p)$ is the length of path $p$. Let $T_{SPT}$
be the routing tree generated by Algorithm~\ref{Algo: CND-MECATRN}, and let $R' = \bigcup_{u \in
U} P'_{u,r}$, where $P'_{u,r}$ denotes the path from $u$ to $r$ in $T_{SPT}$. Then,
\begin{align}\label{Eq: PP1}
\nonumber \frac{COST(T_{TSP})}{Tx+Rx} &= \sum_{u \in U}{\frac{s(u)}{q}l(P'_{u,r})} +
\sum_{e \in R'}{(\lceil z'(e) \rceil - z'(e))}\\
&< \sum_{u \in U}{\frac{s(u)}{q}l(P'_{u,r})} + \sum_{e \in
R'}{1},
\end{align}
where $z'(e) = \sum_{u:e \in P'_{u,r}}{s(u)} / q$. Let $R_{OPT}$ be the route with minimum cost of
installing facilities. (\ref{Eq: PP4}) together with the fact that $COST_{CND}(R) \leqslant
\lambda COST_{CND}(R_{OPT})$ implies
\begin{align}\label{Eq: PP2}
\sum_{u \in U}{\frac{s(u)}{q}l(P'_{u,r})} \leqslant \sum_{u \in
U}{\frac{s(u)}{q}l(P_{u,r})} \leqslant \frac{\lambda COST_{CND}(R_{OPT})}{Tx+Rx}.
\end{align}
In addition,
\begin{equation}\label{Eq: PP3}
\sum_{e \in R'}{1} \leqslant \sum_{e \in R}{1} \leqslant \frac{COST_{CND}(R)}{Tx+Rx}\leqslant
\frac{\lambda COST_{CND}(R_{OPT})}{Tx+Rx}.
\end{equation}
 By (\ref{Eq: PP1}), (\ref{Eq: PP2}), and (\ref{Eq: PP3}), we have
\begin{equation}\label{Eq: PP}
COST(T_{TSP}) < 2 \lambda COST_{CND}(R_{OPT}).
\end{equation}
As in the proof of Theorem~\ref{Thm: MECATRN Approximation},
\begin{equation}\label{Eq: QQ}
COST_{CND}(R_{OPT}) \le COST(T_{OPT}).
\end{equation}
Combining (\ref{Eq: PP}) and (\ref{Eq: QQ}), we obtain
\begin{equation}
COST(T_{TSP}) < 2 \lambda COST(T_{OPT}).
\end{equation}
\end{proof}

When all reports have the same size, Hassin \textit{et al.} propose a $(1 +
\rho_{st})$-approximation algorithm of the CND problem~\cite{R20}, where $\rho_{st}$ denotes the
approximation ratio of the algorithm of the Steiner tree problem. 
Recently, a 1.39-approximation algorithm of the Steiner tree problem 
has been proposed by Byrka \textit{et al.}~\cite{Byrka}.
Therefore ,we can obtain a 4.78-approximation algorithm of the 
MECAT$\_$RN problem by Algorithm~\ref{Algo: CND-MECATRN}. As the reports have different sizes, the algorithm proposed by Hassin \textit{et.
al.} for the CND problem~\cite{R20} has an approximation ratio $(2 +\rho_{st})$, in which case a
6.78-approximation algorithm of the MECAT$\_$RN problem can be obtained by 
Algorithm~\ref{Algo:
CND-MECATRN}. 

\section{Numerical Results} \label{Sec: Numerical Results}
\begin{table}[t]
\caption{Simulation Settings} \centering \label{Tab: Parameter Settings}
\begin{tabular}{|c|c|}
\hline
Number of nodes & 100\\
\hline
Field & $100 \times 100$\\
\hline
Sink location & (50, 50)\\
\hline
$R$ (transmission range) & 20\\
\hline
$q$ (aggregation ratio) & 2, 4, 6, ..., 50\\
\hline
Tx (transmission energy cost per packet) & 2\\
\hline
Rx (reception energy cost per packet) & 1\\
\hline
Uniform report size & 1\\
\hline
Non-uniform report size & 1, 2, 3, 4, 5\\
\hline
Probability of being relay nodes & 0.3\\
\hline
\end{tabular}
\label{table:sim} 
\end{table}
Two simulations were conducted here. In the first and second simulations, algorithms of the MECAT
problem (data aggregation without relay nodes) and the MECAT$\_$RN problem (data aggregation with
relay nodes) were compared, respectively. We also compared our algorithms with the lower bound of
the minimum energy cost $LB$ evaluated by (\ref{LB}).

\begin{equation}\label{LB}
LB = (Tx+Rx)\cdot \max{\lbrace \sum_{u \in U}{\frac{s(u)}{q}l(u,r)}, \vert E(T_{Steiner})\vert
\rbrace},
\end{equation}

\noindent where $U$ is the set of sources, $q$ is the aggregation ratio, $l(u,r)$ is the hop
distance from $u$ to $r$ in a shortest path tree and $\vert E(T_{Steiner})\vert$ is the number of
edges in a Steiner Tree. $LB$ is evaluated by (\ref{LB}) due to the fact that the corresponding
minimum energy cost of the MECAT problem and the MECAT$\_$RN problem is impossible to be smaller
than each of $(Tx+Rx)\cdot \sum_{u \in U}{\frac{s(u)}{q}l(u,r)}$ and $(Tx+Rx) \cdot \vert
E(T_{Steiner})\vert$. Since a Steiner tree cannot be obtained in polynomial time, we use a
2-approximation algorithm to construct a Steiner tree $T(V_{ST},E_{ST})$~\cite{R34}, and evaluate
$|E(T_{Steiner})|$ by (\ref{Edge number of Steiner tree}).

\begin{equation}\label{Edge number of Steiner tree}
|E(T_{Steiner})| = \max{\lbrace \frac{|E_{ST}|}{2}, |U| \rbrace}.
\end{equation}

In a wireless sensor network, 100 sensor nodes were uniformly deployed in a 100 $\times$ 100
field. A link exists between two sensor nodes with distance less than or equal to the transmission
range $R = 20$. Since the transmission power is about two times the reception 
power~\cite{R33},
$Tx$ and $Rx$ are set to 2 and 1, respectively. If all reports have the same size (uniform report
size), the size is set to 1; otherwise (non-uniform report size), the sizes are randomly set to
range from 1 to 5. The energy cost of each algorithm was evaluated under different aggregation
ratios 2, 4, 6, $\cdots$, 50 (2, 4, 6, $\cdots$, 100 for the non-uniform report size). 
In the second simulation, each node has probabilities $0.7$ and
$0.3$ to be a source and a relay node, respectively. Empirical data were obtained by averaging
data of 30 different networks. Table~\ref{Tab: Parameter Settings} summarizes the simulation
settings.

\begin{figure}[t]
\centering
\includegraphics[width=7.5cm]{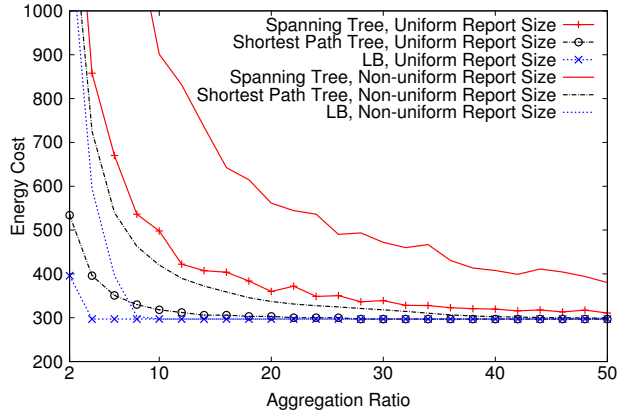}
\caption{Effect of aggregation ratio on energy cost under different algorithms for data
aggregation without relay nodes.} \label{fig:NRUDG}
\end{figure}

\subsection{Results for Data Aggregation without Relay Nodes}
Fig.~\ref{fig:NRUDG} shows the energy cost of algorithms for data aggregation without relay nodes
under different aggregation ratios. It can be seen that the shortest path tree algorithm (the
proposed algorithm for the MECAT problem) significantly outperforms the spanning tree algorithm.
When the aggregation ratio is greater than or equal to the sum of the sizes of the reports sent by
most of the nodes, the energy cost of each algorithm approaches $(Tx +Rx) \cdot \vert V
\vert=300$.

\subsection{Results for Data Aggregation with Relay Nodes}
\begin{figure*}[t]
\center \subfigure[]{\includegraphics[width=8.0cm]{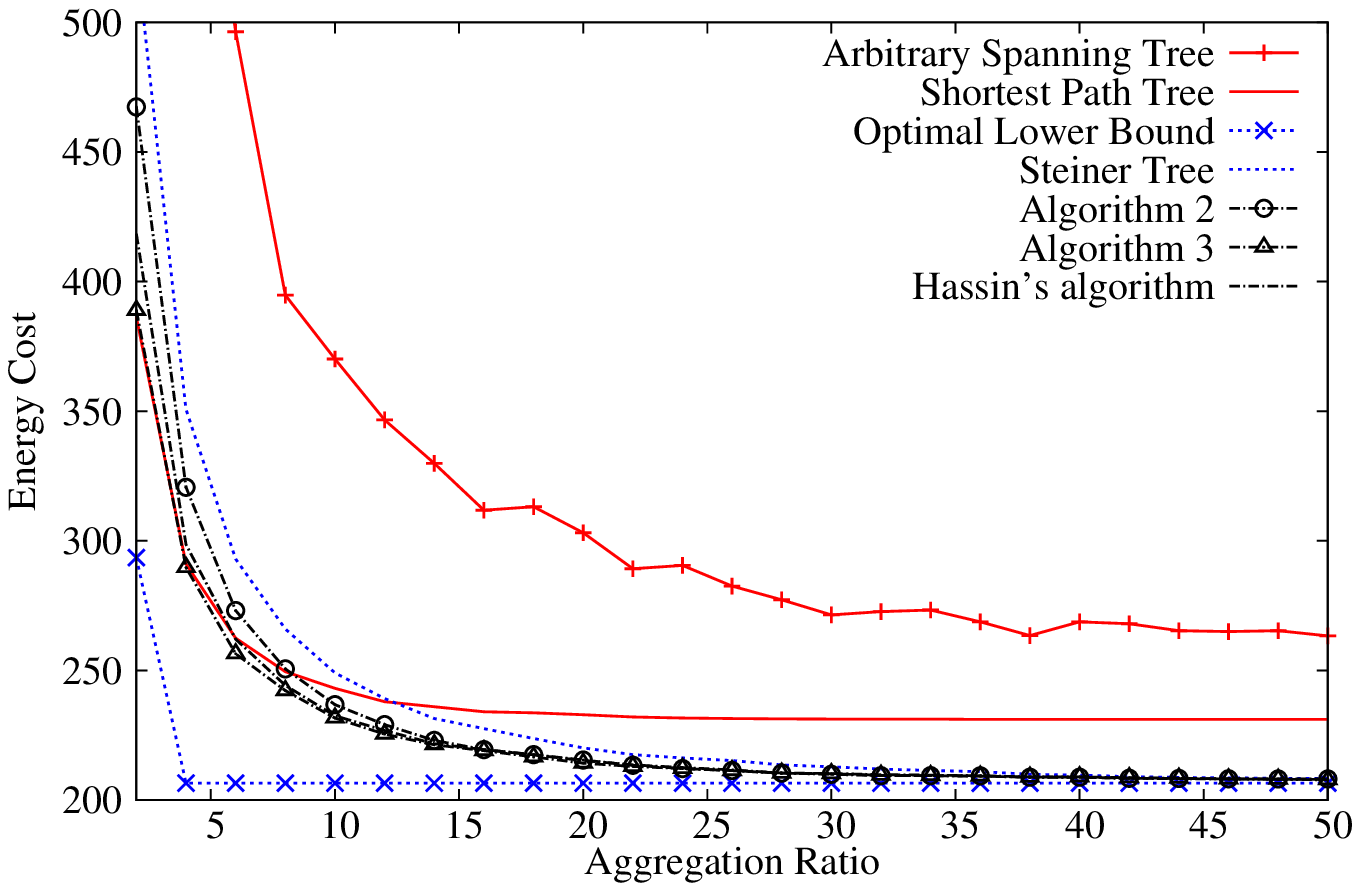}\label{fig:RUUDG1}}\hfil
\subfigure[]{\includegraphics[width=8.0cm]{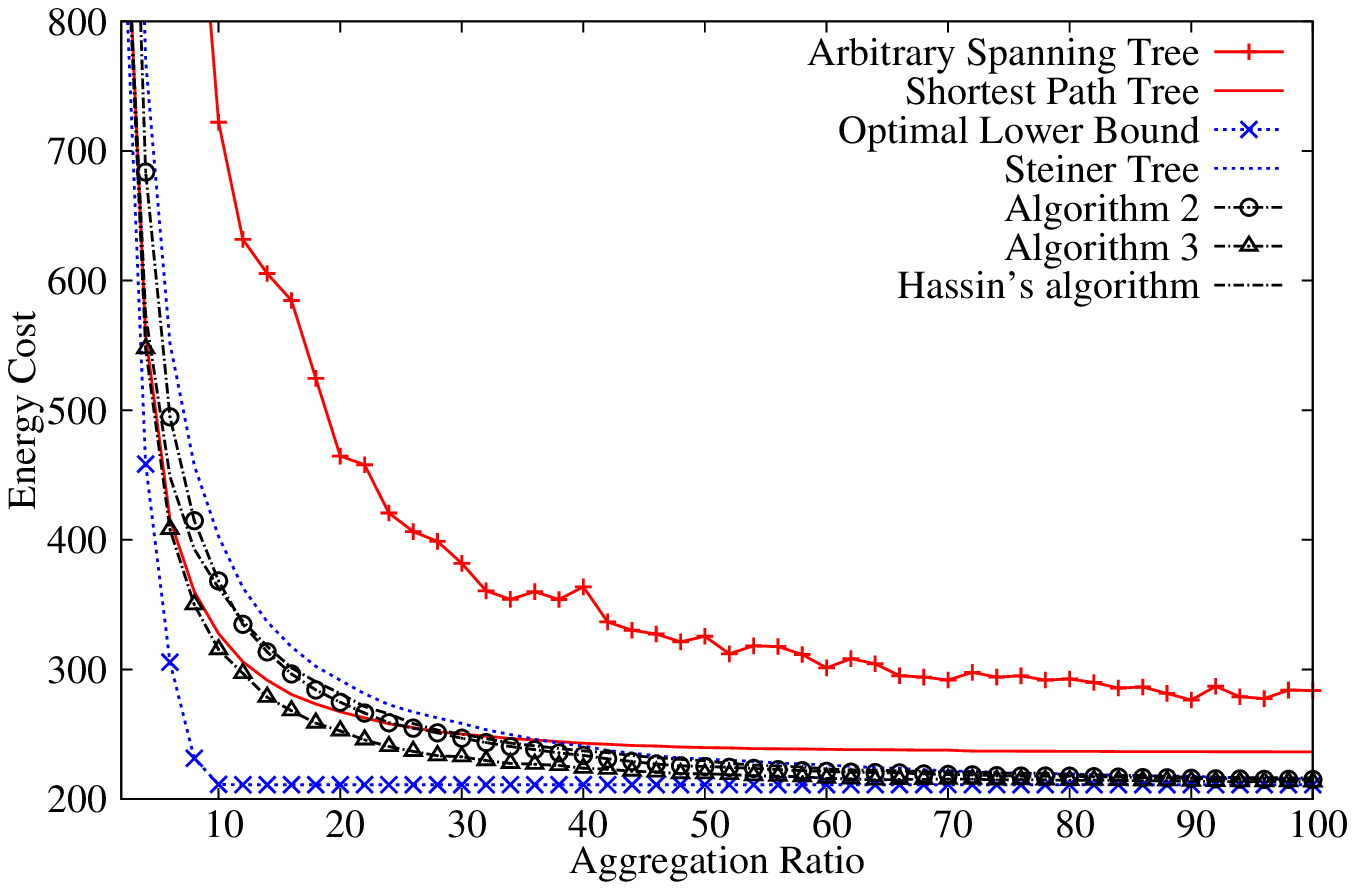}\label{fig:RUUDG2}}\\
\caption{Effect of aggregation ratio on energy cost under different algorithms for data
aggregation with relay nodes. (a) Uniform report size. (b) Non-uniform report size.}
\label{fig:NRUDG2}
\end{figure*}
Fig.~\ref{fig:NRUDG2} shows the energy cost of algorithms for data aggregation with relay nodes
under different aggregation ratios. Algorithms 2 and 3 (the proposed algorithms for the
MECAT$\_$RN problem) construct data aggregation trees based on  Salman's 
algorithm~\cite{R21} and
Hassin's algorithm~\cite{R20}, respectively. A non-tree routing structure established by Hassin's
algorithm~\cite{R20} is also compared here. Four observations are noteworthy. First, the energy
cost of each algorithm approaches $LB$ as the aggregation ratio is great. Second, although a
shortest path tree algorithm and a Steiner tree algorithm have bad performances in the worst cases
(see Theorems~\ref{Thm: Steiner tree algorithm approximation ratio} and \ref{Thm: shortest path
tree algorithm approximation ratio}), they have good performances in average cases. Third, a
shortest path tree performs better and worse than a Steiner tree algorithm when the aggregation
ratio is small and great, respectively. This is because as the aggregation ratio is great, a
packet can carry a large number of reports, and thus, the energy cost highly depends on the number
of edges in the data aggregation tree. On the contrary, the energy cost highly depends on the
lengths of the paths from sources to the sink as the aggregation ratio is small. Fourth, Algorithm
3 outperforms Hassin's algorithm as the reports have different sizes, in which case Hassin's
algorithm usually utilizes only half of the size of a packet, and in contrast, Algorithm 3
utilizes the packet efficiently.

\section{Related Work}\label{Sec: Related Work}
In~\cite{R2}, an algorithm is demonstrated to find the best shortest path tree that maximizes the
network lifetime. In~\cite{R5}, the authors prove the problem of finding an optimal aggregation 
tree that maximizes the network lifetime is NP-complete and propose an approximation algorithm. 
In~\cite{R23}, a randomized $O(1)$-approximation algorithm is given to construct 
a simultaneous optimal aggregation tree based on the geographic correlation of reports. In~\cite{R24}, a data gathering tree is constructed to overcome the changes of the network topology. 
In~\cite{R27}, a data gathering tree is constructed based on the connected dominating set. The goal of 
this paper is to find a distributed data aggregation scheduling such that the time latency is minimized.

In~\cite{R1}, the problem of finding a routing structure minimizing the number of transmitted
packets is studied. They show that routing packets on any two shortest path trees does not significantly
affect the effectiveness of data aggregation. In addition, all reports are assumed to have the
same size and the existence of relay nodes are not taken into consideration, which is different
from this paper.

A problem similar to ours is studied in~\cite{R14}, but the aggregation model and optimization goal are different. 
It is assumed that any $j$ reports can be aggregated into $f(j)$ reports, where $f$ 
is concave. The goal is to minimize the number of transmitted reports. However, in our data aggregation model, 
we assume that any $j$ reports can be aggregated into $\lceil \frac{j}{q} \rceil$ packets. And our 
goal is to minimize the number of transmitted packets.

\section{Conclusion and Future work} \label{Sec: Conclusion}
In this paper, we study the problem of constructing energy-efficient data aggregation trees. Two
types of this problem are investigated: the one without relay nodes and the one with relay nodes.
Both of them are shown to be NP-complete. For the problem without relay nodes, we find that a
shortest path tree algorithm turns out to be a 2-approximation algorithm and can be easily
implemented in a distributed manner. For the problem with relay nodes, we first show that a
shortest path tree algorithm and a Steiner tree algorithm each have bad performance in the worst
cases. We then obtain an $O(1)$-approximation algorithm by constructing a shortest path tree on
the routing structure of the Capacitated Network Design problem. Simulations show that the
proposed algorithms each have good performance in terms of the energy cost. Simulations also show
that for data aggregation with relay nodes, a tree might outperform a non-tree structure in terms
of the energy cost. The reason is in a tree, the data is concentrated in a small number of nodes,
resulting in efficient utilization of packets. 

In this paper, we exploit the feature that the startup energy cost exceeds the transmission energy cost and 
further assume the energy cost $Tx$ and $Rx$ are constants. This assumption is reasonable on low duty-cycle sensor networks, 
since the number of transmitted packets is approximately equal to the number of startups. However, this assumption cannot be 
extend to high duty-cycle sensor networks. Hence, we plan to take the effect of the scheduling into consideration as our 
future work.

\appendix
\subsection{Proof of Theorem~\ref{Thm: Steiner tree algorithm approximation ratio}}
\begin{proof}
Consider one kind of instance of the MECAT$\_$RN problem, MECAT$\_$RN$(G, U, r, q, Tx, Rx, C)$,
where $G = (V, E)$ with weights 1 and 0 associated with each node $u \in U$ and $v \in V \setminus
U$, respectively, $U=\{u_1,u_2,\cdots,u_{\vert U\vert}\}$, $V=\{r\} \cup U \cup \{s_i \vert 3 \le
i \le \vert U\vert\}$, $E=\{\{r,u_1\}\} \cup \{\{u_i,u_{i+1}\} \vert 1 \le i \le \vert U\vert-1\}
\cup \{\{r,s_i\},\{s_i,u_i\} \vert 3 \le i \le \vert U\vert \}$, $q=2$, and $Tx=Rx=1$. See Fig.~\ref{Fig: Steiner tree instance}, for the instance with $\vert U\vert=5$. Clearly, $T_S=(V_S,E_S)$
constructed by a Steiner tree algorithm has energy cost $\Theta(\vert U\vert^2)$, where $V_S=\{r\} \cup
U$ and $E_S=\{(u_1,r)\} \cup \{(u_{i+1},u_i) \vert 1 \le i \le \vert U\vert-1\}$. However,
$T_O=(V_O,E_O)$ has energy cost $\Theta(\vert U\vert)$, where $V_O=V$ and $E_O=\{(u_2,u_1),(u_1,r)\}
\cup \{(u_i,s_i),(s_i,r) \vert 3 \le i \le \vert U\vert \}$.
\end{proof}

\subsection{Proof of Theorem~\ref{Thm: shortest path tree algorithm approximation ratio}}
\begin{proof}
Consider one kind of instance of the MECAT$\_$RN problem, MECAT$\_$RN$(G, U, r, q, Tx, Rx, C)$,
where $G = (V, E)$ with weights 1 and 0 associated with each node $u \in U$ and $v \in V \setminus
U$, respectively, $U=\{u_1,u_2,\cdots,u_{\vert U\vert}\}$, $V=\{r\} \cup U \cup \{s_{i,j} \vert 3
\le i \le \vert U\vert, 1 \le j \le i-2\}$, $E=\{\{r,u_1\}\} \cup \{\{u_i,u_{i+1}\} \vert 1 \le i
\le \vert U\vert-1\} \cup \{\{r,s_{i,1}\},\{s_{i,i-2},u_i\} \vert 3 \le i \le \vert U\vert \} \cup
\{\{s_{i,j},s_{i,j+1}\} \vert 3 \le i \le \vert U\vert, 1 \le j \le i-3\}$, $q=\vert U\vert$, and
$Tx=Rx=1$. See Fig.~\ref{Fig: shortest path tree instance}, for the instance with $\vert
U\vert=5$. Clearly, $T_S=(V_S,E_S)$ constructed by some shortest path tree algorithm has energy cost
$\Theta(\vert U\vert^2)$, where $V_S=V$ and $E_S=\{(u_2,u_1),(u_1,r)\} \cup
\{(s_{i,1},r),(u_i,s_{i,i-2}) \vert 3 \le i \le \vert U\vert \} \cup \{(s_{i,j+1},s_{i,j}) \vert 3
\le i \le \vert U\vert, 1 \le j \le i-3\}$. However, $T_O=(V_O,E_O)$ has energy cost $\Theta(\vert
U\vert)$, where $V_O=\{r\} \cup U$ and $E_O=\{(u_1,r)\} \cup \{(u_{i+1},u_i) \vert 1 \le i \le
\vert U\vert-1\}$.
\end{proof}

\begin{figure}
\center \subfigure[]{\includegraphics[scale=.65]{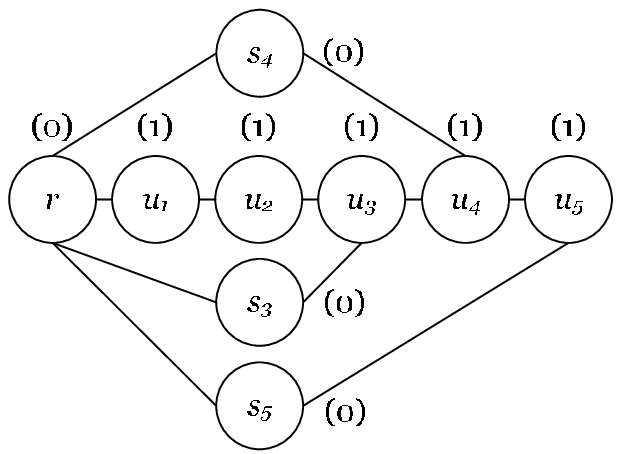}\label{Fig: Steiner tree instance}} \hfil
\subfigure[]{\includegraphics[scale=.65]{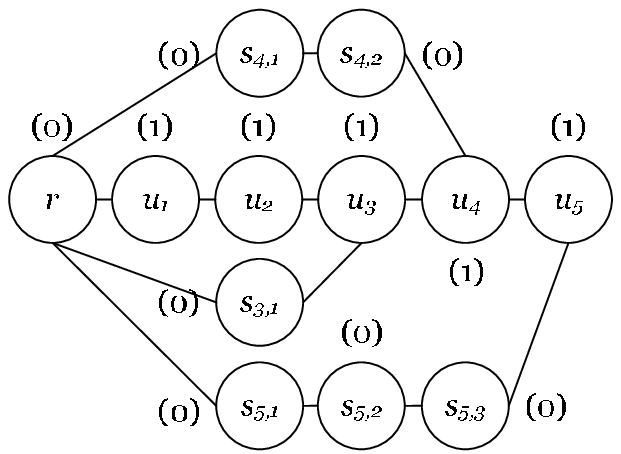}\label{Fig: shortest path tree instance}}\\
\caption{Instances of the MECAT$\_$RN problem. (a) An instance with the approximation ratio of a
Steiner tree algorithm $\Theta(\vert U \vert)$. (b) An instance with the approximation ratio of a
shortest path tree algorithm $\Theta(\vert U \vert)$.}
\end{figure}

\bibliographystyle{IEEEtran}
\bibliography{ARP}

\end{document}